\theoremstyle{plain}
  \newtheorem{theorem}{Theorem}[section]
  \newtheorem{proposition}[theorem]{Proposition}
  \newtheorem{lemma}[theorem]{Lemma}
  \newtheorem{corollary}[theorem]{Corollary}
\theoremstyle{definition}
  \newtheorem{definition}[theorem]{Definition}
\theoremstyle{remark}
\newcommand{\nn}{\nonumber} 
\newcommand{\N}{\mathbb{N}} 
\newcommand{\R}{\mathbb{R}}
\newcommand{\E}{\mathbb E}
\newcommand{\prob}{\mathbb P}
\renewcommand{\phi}{\varphi} 
\renewcommand{\epsilon}{\varepsilon} 
\DeclareMathOperator{\tr}{Tr} 
\DeclareMathOperator{\diag}{diag} 
\DeclareMathOperator{\Ai}{Ai} 
\newcommand{\Her}{\mathcal H}
\title{\bfseries\Large The Laguerre Unitary Process}
\author{J.~R.~Ipsen \\[1em]%
\small ARC Centre of Excellence for Mathematical and Statistical Frontiers,\\%
\small School of Mathematics and Statistics, The University of Melbourne, Victoria 3010, Australia%
}
\date{\today}
\begin{document}	

\maketitle

\begin{abstract}
\noindent
We define a new matrix-valued stochastic process with independent stationary increments from the Laguerre Unitary Ensemble, which in a certain sense may be considered a matrix generalisation of the gamma process. We show that eigenvalues of this matrix-valued process forms a spatiotemporal determinantal point process and give an explicit expression for the correlation kernel in terms of Laguerre polynomials. Furthermore, we show that in an appropriate long time scaling limit, this correlation kernel becomes identical to that of Dyson Brownian Motion.
\end{abstract}

\section{Introduction}

One of the most studied random matrix ensembles is the Laguerre Unitary Ensemble (LUE); often also referred to as the Wishart Ensemble. In this paper, we construct a new matrix-valued stochastic process with stationary and independent LUE distributed increments. We will refer to this process as the Laguerre Unitary Process (LUP), but it is important to emphasise that this process is very different from what have been called the Wishart/Laguerre Process in~\cite{Bru1991,KO2001}. Slightly oversimplified, the processes considered in~\cite{Bru1991,KO2001} may be thought of as matrix generalisations of the (squared) Bessel Process, while the process considered in this paper is more appropriately thought of as matrix a generalisation of the Gamma Process. We may also think of the LUP an additive version of the multiplicative process recently introduced in~\cite{Strahov2015}, which was based on recent progress regarding products of random matrices~\cite{AKW2013,AIK2013}, see also the review~\cite{AI2015}. Our construction of the LUP is also inspired by recent progress on sum and products of random matrices, in particular the paper~\cite{KR2016}. It also worth mentioning that the LUP has certain similarities with the LUE projection process~\cite{FN2011} closely related to the GUE minor process~\cite{JN2006}.

We will show that the eigenvalues of the LUP form a spatiotemporal (or multi-level) determinantal point process. Determinantal point processes plays an important role in many problems arising in Mathematical Physics, see e.g. the reviews~\cite{Soshnikov2000,Johansson2005,Konig2005,BG2016}. The most well-known example of a \emph{spartiotemporal} determinantal point process is the ($\beta=2$) Dyson Brownian motion~\cite{Dyson1962}, see~\cite{Katori2016} for a review. Dyson Brownian Motion itself may be thought of as arising from the eigenvalues of a matrix-valued stochastic process with stationary and independent increments from the Gaussian Unitary Ensemble (GUE). It should be clear that this is very similar to our construction of the LUP. It would be interesting to study matrix-valued processes with stationary and independent increments which gives rise to spatiotemporal determinantal point processes in higher generality. We note that such generalisations are different from the constructing of matrix-valued L\'evy processes as we do not restrict our attention to continuous-time processes and, thus, it is meaningless to speak of continuity in probability as is done for L\'evy processes (we refer to~\cite{Applebaum2009} for more on L\'evy processes). In fact, it is part of our claim that it is worthwhile not to impose such restriction since there are interesting matrix-valued processes with stationary independent increments which cannot be defined for continuous-time. Case in point, we will define the LUP as a discrete-time process and it does not have an extension to continuous-time as we will discuss below. 

Let us first define the LUE. Throughout this paper, $\Her_N\simeq \R^{N^2}$ denotes the space of $N\times N$ Hermitian matrices. 

\begin{definition}\label{def:LUE}
The Laguerre Unitary Ensemble (LUE) is an absolute continuous $\Her_N$-valued random variable with probability density function (PDF)
\begin{equation}
\mathcal P_{a,b}(X):=
\begin{cases}
\displaystyle\frac1{\mathcal Z_{a,b}(N)} e^{-b\tr X}\det(X)^{a} & \text{if $X$ is positive definite}\\
0 & \text{otherwise}
\end{cases} \label{Wishart}
\end{equation}
where $a>-1$ and $b>0$ are real constants and $\mathcal Z_{a,b}(N)$ is a known normalisation constant. We will sometimes refer to such a random variable as a LUE random matrix.
\end{definition}

We note that for $N=1$, a LUE random matrix is the familiar gamma distributed random variable.
Furthermore, if $a$ is a non-negative integer and $\mathbf X$ is an $N\times (N+a)$ random matrix those entries are i.i.d. centred complex Gaussians with variance $1/b$, then $\mathbf Y:=\mathbf X\mathbf X^*$ is a LUE random matrix with PDF $\mathcal P_{a,b}$. Here, $\mathbf X^*$ denotes the Hermitian transpose of $\mathbf X$. The Wishart/Laguerre process mentioned earlier and considered in~\cite{Bru1991,KO2001} is constructed by letting the entries of the random matrix $\mathbf X$ be i.i.d. real/complex Brownian motions.
Our construction of the LUP is very different and is based on the following observation:

\begin{theorem}\label{thm:sum}
If $\mathbf X$ and $\mathbf Y$ are independent LUE random matrices with PDFs $\mathcal P_{a,b}$ and $\mathcal P_{a',b}$, then $\mathbf X+\mathbf Y$ is a LUE random matrix with PDF $\mathcal P_{a+a'+N,b}$.
\end{theorem}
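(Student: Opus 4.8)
The plan is to compute the density of $\mathbf X+\mathbf Y$ directly by convolution, exploiting the fact that the LUE density is (up to normalisation) a product of a trace-exponential factor and a power of a determinant, restricted to the positive-definite cone. Writing $\mathcal P_{a,b}(X)\mathcal P_{a',b}(Y)$, the shared rate $b$ is the crucial point: the exponential factor is $e^{-b\tr X}e^{-b\tr Y}=e^{-b\tr(X+Y)}$, so after the change of variables $Z=X+Y$ (with $Y=Z-X$) the exponential becomes $e^{-b\tr Z}$, independent of the internal integration variable $X$. Hence the density of $Z$ is proportional to
\begin{equation}
e^{-b\tr Z}\int_{0<X<Z}\det(X)^{a}\det(Z-X)^{a'}\,dX,
\end{equation}
where the integral runs over Hermitian $X$ with $0<X$ and $X<Z$ (so that both $X$ and $Z-X$ are positive definite), and $dX$ is Lebesgue measure on $\Her_N$. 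I must show this matrix integral equals a constant times $\det(Z)^{a+a'+N}$.

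The key step is therefore a matrix beta-type integral. First I would reduce to the case $Z=\mathbf I$ by substituting $X=Z^{1/2}WZ^{1/2}$, using that $Z$ is positive definite so $Z^{1/2}$ is well defined; then $\det(X)^a=\det(Z)^a\det(W)^a$, $\det(Z-X)^{a'}=\det(Z)^{a'}\det(\mathbf I-W)^{a'}$, and the Jacobian of $X\mapsto W$ on $\Her_N$ contributes a further factor $\det(Z)^{N}$ (since the map $W\mapsto Z^{1/2}WZ^{1/2}$ has Jacobian $\det(Z)^N$ on the $N^2$-dimensional space $\Her_N$). This collects exactly $\det(Z)^{a+a'+N}$ out front, and what remains is the constant
\begin{equation}
\int_{0<W<\mathbf I}\det(W)^{a}\det(\mathbf I-W)^{a'}\,dW,
\end{equation}
which is the (complex) matrix Beta integral and is finite precisely because $a,a'>-1$. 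I would either cite the standard evaluation of this integral (e.g. from the literature on Hermitian matrix integrals / the complex Wishart calculus) or note that its exact value is irrelevant: it is some positive constant, and since both sides must integrate to $1$ over the positive-definite cone, the constant is forced to match $1/\mathcal Z_{a+a'+N,b}(N)$. This last normalisation remark lets me conclude without evaluating the Beta integral at all.

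The main obstacle is making the change-of-variables and Jacobian bookkeeping on $\Her_N$ rigorous — in particular justifying that $\det(Z)^{N}$ is the correct Jacobian factor for the congruence $W\mapsto Z^{1/2}WZ^{1/2}$ and that the domain $\{0<X<Z\}$ transforms exactly to $\{0<W<\mathbf I\}$ — together with checking convergence of the Beta integral near the boundary of the cone when $a$ or $a'$ lies in $(-1,0)$. Everything else (the cancellation of the exponential, the support condition that $X+Y$ is automatically positive definite) is immediate. Once the proportionality $\mathcal P_{a,b}\ast\mathcal P_{a',b}\propto \mathbf 1_{Z>0}\,e^{-b\tr Z}\det(Z)^{a+a'+N}$ is established, recognising the right-hand side as $\mathcal P_{a+a'+N,b}$ (valid since $a+a'+N>-1$) completes the proof.
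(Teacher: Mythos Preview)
Your argument is correct, but it takes a genuinely different route from the paper. The paper proves Theorem~\ref{thm:sum} via the characteristic function: it first establishes (Lemma~\ref{lemma:fourier-LUE}) that a LUE matrix with density $\mathcal P_{a,b}$ has characteristic function $\phi_{\mathbf X}(T)=\det(I_N-iT/b)^{-(N+a)}$, by passing to eigenvalues, evaluating the unitary integral with the Harish-Chandra--Itzykson--Zuber formula, and applying Andr\'eief's identity to reduce to a one-dimensional gamma Fourier transform. Independence then gives $\phi_{\mathbf X+\mathbf Y}(T)=\det(I_N-iT/b)^{-(2N+a+a')}$, which is recognised as the characteristic function of $\mathcal P_{a+a'+N,b}$. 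Your approach instead works directly in position space: you compute the convolution on $\Her_N$, use the congruence $X=Z^{1/2}WZ^{1/2}$ to pull out $\det(Z)^{a+a'+N}$ (the Jacobian $\det(Z)^N$ is indeed the standard one for this linear map on $\Her_N$), and identify the remaining integral as the complex matrix Beta integral, whose exact value you correctly note is irrelevant once normalisation is invoked. Your route is more self-contained---it avoids the HCIZ integral and Andr\'eief's identity entirely, relying only on a change of variables and the finiteness of the matrix Beta integral---whereas the paper's Fourier approach yields the explicit characteristic function of the LUE as a byproduct and situates the result within the spherical-function framework of~\cite{KR2016}.
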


For $N=1$, Theorem~\ref{thm:sum} is a well-known property for gamma distributed random variables. We will present a proof of Theorem~\ref{thm:sum} very similar to working presented in~\cite{KR2016}, but we will need some facts introduced in Section~\ref{sec:eigenvalues} and we therefore postpone the proof to Appendix~\ref{appendixA}. 

\begin{definition}\label{def:LUP}
The Laguerre Unitary Process (LUP) $\{\mathbf L(t)\}_{t=0,1,\ldots}$ is a $\Her_N$-valued discrete-time stochastic process defined as
\begin{equation}\label{LUP}
\mathbf L(t)-\mathbf L(t-1)=\mathbf X(t)\quad \text{for}\quad 
t=1,2,3,\ldots \quad\text{and}\quad \mathbf L(0)=0,
\end{equation}
where $\{\mathbf X(t)\}_{t}$ are i.i.d. LUE random matrices with PDF $\mathcal P_{0,1}$. 
\end{definition}

The following corollary follows straightforwardly.

\begin{corollary}\label{cor:properties}
The LUP have the following properties:
\begin{enumerate}
 \item[(i)] \emph{Independent increments:} Let $\infty>t_n>\cdots>t_1>t_0\geq0$ be integers. The random matrices
 $\{\mathbf L(t_k)-\mathbf L(t_{k-1})\}_{k=1,\ldots,n}$ are independent. 
 \item[(ii)] \emph{Stationary increments:} Let $\infty> t>s\geq0$ and $\infty>u\geq 0$ be integers. The random matrices 
 $\mathbf L(t)-\mathbf L(s)$ and $\mathbf L(t+u)-\mathbf L(s+u)$ are equal in distribution.
 \item[(iii)] \emph{Unitary invariance:} Let $\infty> t>0$ be an integer and let $U\in U(N)$ be an $N\times N$ unitary matrix. The random matrices $\mathbf L(t)$ and $U\mathbf L(t)U^{-1}$ are equal in distribution.
 \item[(iv)] \emph{LUE marginal distribution:} Let $\infty> t>0$ be an integer. The random matrix $\mathbf L(t)$ has PDF $\mathcal P_{N(t-1),1}$ given by~\eqref{Wishart}.
\end{enumerate}
\end{corollary}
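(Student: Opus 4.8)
The plan is to verify each of the four properties in turn, deriving (i), (ii), and (iii) directly from the definition of the LUP and elementary properties of independent random matrices, and then obtaining (iv) as a consequence of Theorem~\ref{thm:sum} together with an induction on $t$.

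\textbf{Independent and stationary increments.} Since $\mathbf L(t)=\sum_{s=1}^{t}\mathbf X(s)$ with the $\mathbf X(s)$ i.i.d., the increment $\mathbf L(t_k)-\mathbf L(t_{k-1})=\sum_{s=t_{k-1}+1}^{t_k}\mathbf X(s)$ is a function of the block $\{\mathbf X(s):t_{k-1}<s\le t_k\}$ only. For distinct $k$ these index blocks are disjoint, so the corresponding increments are independent, giving (i). For (ii), $\mathbf L(t)-\mathbf L(s)=\sum_{r=s+1}^{t}\mathbf X(r)$ and $\mathbf L(t+u)-\mathbf L(s+u)=\sum_{r=s+1}^{t}\mathbf X(r+u)$ are sums of $t-s$ i.i.d.\ copies of the same LUE random matrix indexed over shifted windows of equal length, hence equal in distribution. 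Both arguments are routine and I would state them briefly.

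\textbf{Unitary invariance.} The PDF $\mathcal P_{0,1}$ in~\eqref{Wishart} depends on $X$ only through $\tr X$ and $\det X$ (and the property of being positive definite), all of which are invariant under $X\mapsto UXU^{-1}$; hence each increment $\mathbf X(s)$ is invariant in distribution under conjugation by a fixed $U\in U(N)$. Conjugation is linear, so $U\mathbf L(t)U^{-1}=\sum_{s=1}^{t}U\mathbf X(s)U^{-1}$, and since $(U\mathbf X(1)U^{-1},\dots,U\mathbf X(t)U^{-1})\overset{d}{=}(\mathbf X(1),\dots,\mathbf X(t))$ by independence and marginal invariance, summing gives $U\mathbf L(t)U^{-1}\overset{d}{=}\mathbf L(t)$, which is (iii).

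\textbf{LUE marginal distribution.} This is the substantive point and the place where Theorem~\ref{thm:sum} enters. I proceed by induction on $t$. For $t=1$, $\mathbf L(1)=\mathbf X(1)$ has PDF $\mathcal P_{0,1}=\mathcal P_{N\cdot 0,1}$, as required. Assume $\mathbf L(t-1)$ has PDF $\mathcal P_{N(t-2),1}$ for some $t\ge 2$. Then $\mathbf L(t)=\mathbf L(t-1)+\mathbf X(t)$ is a sum of two independent LUE random matrices with parameters $(a,b)=(N(t-2),1)$ and $(a',b)=(0,1)$, so by Theorem~\ref{thm:sum} it has PDF $\mathcal P_{a+a'+N,\,1}=\mathcal P_{N(t-2)+N,\,1}=\mathcal P_{N(t-1),1}$, closing the induction. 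I expect no genuine obstacle here, since all the analytic content has been isolated in Theorem~\ref{thm:sum}; the only point to be careful about is checking the base case and verifying that the parameter $a=N(t-1)$ indeed satisfies the admissibility condition $a>-1$ for all $t\ge 1$, which it does since $N\ge 1$.
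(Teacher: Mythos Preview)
Your proposal is correct and matches the paper's approach: the paper does not give an explicit proof but states that the corollary ``follows straightforwardly'' from Definition~\ref{def:LUP} and Theorem~\ref{thm:sum}, and you have simply spelled out those straightforward details, with the induction using Theorem~\ref{thm:sum} being exactly the intended route to (iv).
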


As final remark of this section, it is worth to briefly mention why we have defined the LUP as a \emph{discrete-time} process.
Imagine that we want to construct stochastic process with independent stationary LUE increments $\{\widetilde{\mathbf L}(t)\}_t$. Let $t,s\in\mathcal T$ be positive times and $\mathcal T\subseteq\R_+$ be some index set. Up to an overall multiplicative constants and a rescaling of time, Theorem~\ref{thm:sum} tells us that we must have
\begin{equation}\label{step}
\prob\Big(\widetilde{\mathbf L}(t+s)-\widetilde{\mathbf L}(t)\in[X,X+dX)\Big)=\mathcal P_{N(s-1),1}(X)dX.
\end{equation}
Now, we note that the constraint on the index $a$ in the PDF~\eqref{Wishart} implies a constraint on the time steps $s>1-\frac1N$. Thus, it is impossible contruct a continuous-time process with independent stationary LUE increments for $N\geq 2$. Moreover, typical interest regards the large-$N$ limit, which implies the constraint $s\geq 1$, which is the reason why we have chosen $\mathcal T=\mathbb N$ in Definition~\ref{def:LUP}. For $N=1$, our time step constraint reads $s>0$ and the well-known Gamma Process is a continuous-time process with independent stationary $N=1$ LUE (i.e. gamma distributed) increments.

\section{Eigenvalues of the LUP}
\label{sec:eigenvalues}

As alluded to in the Introduction, our main interest is typically not random matrices themselves but rather their eigenvalues. 
It is a classical result from random matrix theory (see e.g.~\cite{Forrester2010}), that the PDF~\eqref{Wishart} for the LUE induces a joint PDF on the eigenvalues given by
\begin{equation}\label{LUE}
P_{a,b}(x_1,\ldots,x_N):=
\frac1{Z_{a,b}(N)}\triangle_N(x)^2\prod_{i=1}^N w_{a,b}(x_i),\qquad x_1,\ldots,x_N\in\R
\end{equation}
where 
\begin{equation}\label{normalisation}
Z_{a,b}(N)=\frac{N!}{b^{N(N-1)}}\prod_{j=1}^N\frac{\Gamma(j)\Gamma(a+j)}{\Gamma(a+1)}
\end{equation}
is a normalisation constant, 
\begin{equation}\label{weight}
w_{a,b}(x):=
\begin{cases}
 \displaystyle\frac{b^{a+1}}{\Gamma(a+1)}x^{a}e^{-bx} & \text{if }x>0\\
 0 & \textup{otherwise}
\end{cases}
\end{equation}
is a weight function, and
\begin{equation}
\triangle_N(x):=\det_{1\leq k,\ell\leq N}[x_k^{\ell-1}]=\prod_{1\leq k<\ell\leq N}(x_k-x_\ell)
\end{equation}
is the so-called Vandermonde determinant. We note that the weight function~\eqref{weight} is identical to the PDF of a gamma distributed random variable.

It is our goal to study the eigenvalues of the LUP as a stochastic process on $\R^N$ (we do not order the eigenvalues). 
Let $\wp_t$ denote the joint PDF of the eigenvalues of $\mathbf L(t)$. From Corollary~\ref{cor:properties} (iv) we know that $\mathbf L(t)$ is a random matrix from the LUE with density $\mathcal P_{N(t-1),1}$, and therefore that
\begin{equation}\label{jpdf}
\wp_t(x_1,\ldots,x_N)=P_{N(t-1),1}(x_1,\ldots,x_N),
\end{equation}
where $P_{a,b}$ is given by~\eqref{LUE}. Our first main result will be to determine the transition probability density $p_{t,s}$ such that
\begin{equation}
\wp_t(y_1,\ldots,y_N)=\int_{\R^N}p_{t,s}(y_1,\ldots,y_N\vert x_1,\ldots,x_N)\wp_s(x_1,\ldots,x_N)dx_1,\ldots,dx_N
\end{equation}
for $t>s>0$.

\begin{theorem}
Let $t>s>0$. The transition probability density for the eigenvalues of the LUP is given by
\begin{equation}\label{tranistion}
 p_{t,s}(y_1,\ldots,y_N\vert x_1,\ldots,x_N)=
 \frac1{N!}\frac{\triangle_N(y)}{\triangle_N(x)}\det_{1\leq k,\ell\leq N}\Big(\kappa_{t-s}(y_k-x_\ell)\Big),
\end{equation}
where $\kappa_{t-s}(x)=w_{N(t-s)-1,1}(x)$ and $w_{a,b}$ is defined by~\eqref{weight}.
\end{theorem}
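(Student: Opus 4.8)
The plan is to compute the transition density directly from the matrix picture and then push it down to eigenvalues using the Harish-Chandra--Itzykson--Zuber (HCIZ) integral. By Corollary~\ref{cor:properties}(i)--(ii), the increment $\mathbf L(t)-\mathbf L(s)$ is independent of $\mathbf L(s)$ and is distributed as $\mathbf L(t-s)$, hence by Corollary~\ref{cor:properties}(iv) it has PDF $\mathcal P_{N(t-s-1),1}$. So at the level of matrices the transition density is $P(Y\mid X)=\mathcal P_{N(t-s-1),1}(Y-X)$, supported on $Y$ with $Y-X$ positive definite. The task is to integrate this over the unitary orbit of $Y$ while conditioning on the eigenvalues of $X$. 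Concretely, writing $X=U\diag(x)U^*$ and $Y=V\diag(y)V^*$, one diagonalises, uses the Weyl integration formula to pass from $dY$ to $\triangle_N(y)^2\,dy\,dV$, and is left with an integral over $V\in U(N)$ of $e^{-\tr(V\diag(y)V^*-X)}\det(V\diag(y)V^*-X)^{N(t-s-1)}$ against Haar measure, where $X$ may be taken diagonal equal to $\diag(x)$ by unitary invariance of Haar measure.

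The key step is that this unitary integral linearises. The factor $e^{-\tr(Y-X)}=e^{-\sum y_k}e^{\sum x_\ell}$ is orbit-independent and pulls out. The remaining object is $\int_{U(N)}\det\!\big(V\diag(y)V^*-\diag(x)\big)^{N(t-s-1)}\,dV$; the natural route is to recognise, after another use of Weyl/HCIZ, that the relevant building block is the HCIZ-type formula
\[
\int_{U(N)} e^{-\tr(V\diag(y)V^*M)}\,dV \;=\; c_N\,\frac{\det_{k,\ell}\big[e^{-y_k\mu_\ell}\big]}{\triangle_N(y)\,\triangle_N(\mu)},
\]
for $M$ with eigenvalues $\mu$, together with the integral representation $z^{a}e^{-z}/\Gamma(a+1)=w_{a,1}(z)$ which lets one write each scalar weight $\kappa_{t-s}(y_k-x_\ell)=w_{N(t-s)-1,1}(y_k-x_\ell)$ and recombine. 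Carrying the bookkeeping through, the $\triangle_N(x)^2$ from the Jacobian on the $X$-side and one Vandermonde from HCIZ cancel down to the claimed ratio $\triangle_N(y)/\triangle_N(x)$, the $\triangle_N(y)^2$ on the $Y$-side collapses similarly, and the determinant $\det_{k,\ell}[\kappa_{t-s}(y_k-x_\ell)]$ emerges from the determinantal structure of the HCIZ kernel. The $1/N!$ arises from the unordered convention for eigenvalues. An alternative, possibly cleaner, route avoiding HCIZ: verify the formula by checking the Chapman--Kolmogorov (semigroup) consistency together with the known marginal~\eqref{jpdf}; that is, show that convolving~\eqref{tranistion} with $P_{N(s-1),1}$ reproduces $P_{N(t-1),1}$, which reduces to the Andréief/Gram identity and the gamma-convolution identity $\int_0^{x}w_{a,1}(x-u)\,w_{a',1}(u)\,du = \binom{}{}\,w_{a+a'+1,1}(x)$ — essentially a scalar instance of Theorem~\ref{thm:sum}.

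The main obstacle I expect is the HCIZ bookkeeping when the exponent $N(t-s-1)$ is not a non-negative integer: the clean "rectangular Gaussian" interpretation of the LUE (and hence the clean Gram-type integral representation) only holds for integer parameter, so for general real $t-s>1$ one must either (a) use the integral representation of the determinant-power via a Laplace/Gamma transform to reduce to the genuinely analytic HCIZ integral, or (b) prove the determinantal formula for integer parameters first and then argue by analytic continuation in $t-s$ (both sides are, for fixed ordered arguments, analytic in the parameter on $t-s>1$, and they agree on an infinite set with an accumulation point). The analytic-continuation argument is the safest; the positivity/support constraint $Y-X>0$ survives the continuation since it is built into the weight $w_{\cdot,1}$ vanishing off $\R_+$. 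The remaining steps — the Weyl integration formula, pulling out the trace factor, and assembling Vandermondes — are routine once the core unitary integral is in hand.
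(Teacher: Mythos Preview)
Your overall strategy---pass from the matrix transition density $\mathcal P_{N(t-s-1),1}(Y-X)$ to eigenvalues via diagonalisation and a unitary integral---is exactly what the paper does. The gap is in the key step: the integral you must evaluate is
\[
\int_{U(N)}{\det}_+\!\big(V\diag(y)V^{-1}-\diag(x)\big)^{N(t-s-1)}\,d\mu(V),
\]
and this is \emph{not} the HCIZ integral. HCIZ handles $\int_{U(N)}e^{\tr(AVBV^{-1})}dV$, whereas here the integrand is a power of a determinant with a positivity constraint. Your sketch does not give a working reduction from one to the other; the sentence ``after another use of Weyl/HCIZ'' conflates the exponential-trace kernel with the determinant-power kernel, and the gamma representation $w_{a,1}(z)=z^a e^{-z}/\Gamma(a+1)$ only enters \emph{after} the unitary integral has produced the factors $(y_k-x_\ell)_+^{N(t-s)-1}$. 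The paper sidesteps this by quoting the needed identity directly from Kieburg--Kuijlaars--Stivigny~[KKS2016]:
\[
\int_{U(N)}{\det}_+\!\big(UYU^{-1}-X\big)^{c}\,d\mu(U)\ \propto\ \frac{1}{\triangle_N(y)\triangle_N(x)}\det_{k,\ell}\big((y_k-x_\ell)_+^{c+N-1}\big),
\]
which is exactly what is required. If you want to derive this yourself for integer $c=N(t-s-1)$ (and note that for the LUP as defined, $t,s\in\mathbb N$ so this exponent is always a non-negative integer---your analytic continuation worry is moot here), the rectangular-Gaussian route you allude to does work but needs a genuine additional argument, not just ``HCIZ bookkeeping''.

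On the constant: your claim that $1/N!$ ``arises from the unordered convention'' is too quick---the cited group integral is only given up to proportionality, so the overall constant is genuinely undetermined at that stage. The paper fixes $C=1/N!$ by exactly the Chapman--Kolmogorov/Andr\'eief/gamma-convolution check you describe as your ``alternative, possibly cleaner, route''; in the paper this is not an alternative but an essential part of the proof.
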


\begin{proof}
Let $X=\diag(x_1,\ldots  x_N)$. We know from~\eqref{step} that we have
 \begin{equation}
 \prob\Big(\mathbf L(t)\in [H,H+dH)\Big\vert \mathbf L(s)=X\Big)
 =\mathcal P_{N(t-s-1),b}(H-X)dH
 \end{equation}
and from~\eqref{Wishart} that
 \begin{equation}
  \mathcal P_{N(t-s-1),b}(H-X)=
  \frac1{\mathcal Z_{N(t-s-1),1}(N)}\prod_{n=1}^Ne^{x_n} e^{-\tr H}{\det}_+(H-X)^{N(t-s-1)}
 \end{equation}
where
 \begin{equation}
 {\det}_+(A)=
 \begin{cases}
  \det(A) & \text{if $A$ is positive definite}\\
  0 & \text{otherwise}
 \end{cases}.
 \end{equation}
To find the transition probability density for the eigenvalues, we must make an eigenvalue decomposition for $H$ and then integrate over the eigenvectors. Let $Y=\diag(y_1,\ldots,y_N)$, then we have 
  \begin{equation}\label{tranistion-w-U}
p_{t,s}(y_1,\ldots,y_N\vert x_1,\ldots,x_N)
 \propto
  \triangle_N(y)^2\prod_{n=1}^Ne^{y_n-x_n} \int_{U(N)}{\det}_+(UYU^{-1}-X)^{N(t-s-1)}d\mu(U),
 \end{equation}
where $\mu$ is the Haar measure on the unitary group $U(N)$. The squared Vandermonde determinant which appears in~\eqref{tranistion-w-U} comes from the Jacobian of our change of variables from $H$ to its eigenvalues and eigenvectors. 

The next step is to perform the integral over the unitary group in~\eqref{tranistion-w-U}. We know from~\cite{KKS2016} that
\begin{equation}
\int_{U(N)}{\det}_+(UYU^{-1}-X)^{N(t-s-1)}d\mu(U)\propto
\frac1{\triangle_N(y)\triangle_N(x)}
\det_{1\leq k,\ell\leq N}\Big((y_k-x_\ell)_+^{N(t-s)-1}\Big) 
\end{equation}
where
\begin{equation}
 (a)_+=
 \begin{cases}
 a & \text{if } a>0\\
 0 & \text{otherwise}
 \end{cases}.
\end{equation}
Using standard determinant manipulations, we see that transition probability density can be written as
\begin{equation}\label{transition-C}
 p_{t,s}(y_1,\ldots,y_N\vert x_1,\ldots,x_N)=C\frac{\triangle_N(y)}{\triangle_N(x)}\det_{1\leq k,\ell\leq N}\Big(\kappa_{t-s}(y_k-x_\ell)\Big),
\end{equation}
where $C$ is some constant of proportionality. Thus, it only remains to show that $C=1/N!$.

To find $C$ we will use that the transition probability satisfy
\begin{multline}
 p_{t,s}(y_1,\ldots,y_N\vert x_1,\ldots,x_N)=\\
 \int_{\R^N} p_{t,u}(y_1,\ldots,y_N\vert z_1,\ldots,z_N)p_{u,s}(z_1,\ldots,z_N\vert x_1,\ldots,x_N)dz_1\cdots dz_N
\end{multline}
for $t>u>s>0$. Using~\eqref{transition-C} together with Andreief's integral identity~\cite{Andreief1886,Forrester2018}, we see that
\begin{multline}
 \int_{\R^N} p_{t,u}(y_1,\ldots,y_N\vert z_1,\ldots,z_N)p_{u,s}(z_1,\ldots,z_N\vert x_1,\ldots,x_N)dz_1\cdots dz_N=\\
 N!C^2\det_{1\leq k,\ell,N}\Big(\int_\R \kappa_{t-u}(y-z)\kappa_{u-s}(z-x)dz\Big).
\end{multline}
We recall that $\kappa_{t-s}(x)=w_{N(t-s)-1,1}(x)$. The weight function~\eqref{weight} is identical to the PDF of a gamma distributed random variable and it is well-known that the density have following convolutive property
\begin{equation}\label{kappa-convolution}
\int_\R  \kappa_{t-u}(y-z)\kappa_{u-s}(z-x)dz=\kappa_{t-s}(y-x)
\end{equation}
for $t>u>s>0$ (this is easily verified by direct integration). Thus, using~\eqref{transition-C} once again we get
\begin{multline}
  \int_{\R^N} p_{t,u}(y_1,\ldots,y_N\vert z_1,\ldots,z_N)p_{u,s}(z_1,\ldots,z_N\vert x_1,\ldots,x_N)dz_1\cdots dz_N=\\
 N!Cp_{t,s}(y_1,\ldots,y_N\vert x_1,\ldots,x_N)
\end{multline}
and consequently $C=1/N!$, which completes the proof.
\end{proof}

\begin{corollary}
Let $n\geq 0$ and $t_n>\cdots>t_1>t_0>0$ be integers.
The spatiotemporal joint PDF for the eigenvalues of the LUP at times $t_n,\ldots,t_1,t_0$ is
\begin{multline}\label{jpdf-st}
\wp_{t_n,\ldots,t_0}(x^{(n)}_1,\ldots,x^{(n)}_N;\ldots;x^{(0)}_1,\ldots,x^{(0)}_N)=
\frac{\triangle_N(x^{(n)})\triangle_N(x^{(0)})}{(N!)^nZ_{N(t_0-1),1}(N)}\prod_{j=1}^Nw_{N(t_0-1),1}(x_j^{(0)})\\
\times \prod_{m=1}^n \det_{1\leq k,\ell\leq N}\Big(\kappa_{t_m-t_{m-1}}\big(x_k^{(m)}-x_\ell^{(m-1)}\big)\Big),
\end{multline}
where
\begin{equation}
\big(x_1^{(n)},\ldots,x_N^{(n)},t_n\big),\ldots,\big(x_1^{(0)},\ldots,x_N^{(0)},t_0\big)\in\R^N\times\N
\end{equation}
are space-time points.
\end{corollary}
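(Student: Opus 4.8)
The plan is to assemble the multi-time joint density from the single-time marginal~\eqref{jpdf} and the transition density~\eqref{tranistion}, using that the eigenvalue process of the LUP is a Markov chain.

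First I would verify this Markov property. By Corollary~\ref{cor:properties}(i) the matrix process $\mathbf L$ has independent increments, hence is Markov; moreover an increment $\mathbf L(t)-\mathbf L(s)$ is a sum of i.i.d.\ unitarily invariant LUE matrices, so it is itself unitarily invariant and independent of the past. Writing $\mathbf L(s)=UDU^{-1}$ with $D=\diag(x_1(s),\ldots,x_N(s))$, the spectrum of $\mathbf L(t)=UDU^{-1}+(\mathbf L(t)-\mathbf L(s))$ equals the spectrum of $D+U^{-1}(\mathbf L(t)-\mathbf L(s))U$, which is distributed as the spectrum of $D+(\mathbf L(t)-\mathbf L(s))$; this law depends on $\mathbf L(s)$ only through $D$. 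Thus the eigenvalue process is Markov and its $(t,s)$ transition density is exactly the one computed in the theorem (with $X=\diag(x_1,\ldots,x_N)$), namely $p_{t,s}$. The Chapman--Kolmogorov consistency of the family $\{p_{t,s}\}$ is the identity already established while proving the theorem (it is what forced $C=1/N!$).

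Granting Markovianity, the finite-dimensional distributions factorise:
\begin{equation*}
\wp_{t_n,\ldots,t_0}(x^{(n)};\ldots;x^{(0)})=\wp_{t_0}(x^{(0)})\prod_{m=1}^{n}p_{t_m,t_{m-1}}(x^{(m)}\vert x^{(m-1)}).
\end{equation*}
Substituting $\wp_{t_0}=P_{N(t_0-1),1}$ from~\eqref{jpdf}--\eqref{LUE} and the explicit form~\eqref{tranistion} of each factor, the Vandermonde ratios telescope, $\prod_{m=1}^{n}\triangle_N(x^{(m)})/\triangle_N(x^{(m-1)})=\triangle_N(x^{(n)})/\triangle_N(x^{(0)})$; combined with the factor $\triangle_N(x^{(0)})^2$ from $P_{N(t_0-1),1}$ this leaves $\triangle_N(x^{(n)})\triangle_N(x^{(0)})$. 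The $n$ prefactors $1/N!$ collect into $1/(N!)^n$, while the weight product $\prod_{j}w_{N(t_0-1),1}(x_j^{(0)})$ and the normalisation $1/Z_{N(t_0-1),1}(N)$ are carried through unchanged; this is precisely~\eqref{jpdf-st}. For $n=0$ the product over $m$ is empty and the formula collapses to~\eqref{jpdf}, so it suffices to treat $n\geq 1$.

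The only step requiring genuine care is the reduction of Markovianity of the eigenvalues to the unitary invariance of the increments — i.e.\ that conditioning on the full matrix $\mathbf L(s)$ gives the same law for the spectrum of $\mathbf L(t)$ as conditioning on the spectrum of $\mathbf L(s)$ alone. Everything afterwards is bookkeeping: telescoping the Vandermonde determinants and collecting the constants.
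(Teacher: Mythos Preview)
Your proposal is correct and follows the same route as the paper: factor the joint density as $\wp_{t_0}\prod_{m=1}^n p_{t_m,t_{m-1}}$ via Markovianity, then substitute~\eqref{jpdf} and~\eqref{tranistion} and telescope the Vandermonde ratios. The only difference is that you explicitly justify the Markov property of the eigenvalue process through unitary invariance of the increments, whereas the paper simply asserts the factorisation without comment.
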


\begin{proof}
The spatiotemporal joint PDF for the eigenvalues of the LUP is given by
\begin{multline}
\wp_{t_n,\ldots,t_0}(x^{(n)}_1,\ldots,x^{(n)}_N;\ldots;x^{(0)}_1,\ldots,x^{(0)}_N)=
\wp_{t_0}(x^{(0)}_1,\ldots,x^{(0)}_N)\\
\times\prod_{m=1}^n p_{t_m,t_{m-1}}\Big(x^{(m)}_1,\ldots,x^{(m)}_N\Big\vert x^{(m-1)}_1,\ldots,x^{(m-1)}_N\Big),
\end{multline}
where $\wp_s$ is the joint PDF for the eigenvalues of the LUP at time $s$ and $p_{t,s}$ is the transition probability density from time $s$ to time $t$. The corollary follows immediately from~\eqref{jpdf} and~\eqref{tranistion}.
\end{proof}

\section{LUP as a spatiotemporal determinantal point process}

The purpose of this section is to obtain an explicit expression for the spatiotemporal correlation function for the eigenvalues of the LUP.

\begin{definition}
Let $N\geq k_n,\ldots,k_0\geq 0$ be integers and $k:=k_n+\cdots+k_0$. The $k$-point spatiotemporal correlation function is defined as
\begin{multline}\label{corr}
R_{t_n,\ldots,t_0}^k(x_1^{(n)},\ldots,x_{k_n}^{(n)};\cdots;x_1^{(0)},\ldots,x_{k_0}^{(0)}):=
\prod_{m=0}^n\frac{N!}{(N-k_m)!}\\
\times\int_{\R^{(N-k_n)\times\cdots\times(N-k_0)}}
\wp_{t_n,\ldots,t_0}(x_1^{(n)},\ldots,x_N^{(n)};\cdots;x_1^{(0)},\ldots,x_N^{(0)})
\prod_{j=0}^n\prod_{\ell_j=k_j+1}^Ndx^{(j)}_{\ell_j},
\end{multline}
where $\wp_{t_n,\ldots,t_0}$ spatiotemporal joint PDF at times $t_n>\cdots>t_0>0$.
\end{definition}

It is known from the Eynard--Mehta Theorem~\cite{EM1998} (see also~\cite{BR2005}), that given a spatiotemporal joint PDF with structure~\eqref{jpdf-st}, then there exists a correlation kernel $K_N$ such that the correlation functions~\eqref{corr} can be written as
\begin{equation}\label{corr-det}
R_{t_n,\ldots,t_0}^k(x_1^{(n)},\ldots,x_{k_n}^{(n)};\cdots;x_1^{(0)},\ldots,x_{k_0}^{(0)})=
\det_{\substack{1\leq \ell,m\leq n\\i=1,\ldots,k_\ell\\j=1,\ldots,k_m}}
\Big(K_N(x^{(\ell)}_i,t_\ell\,\vert\, x^{(m)}_j,t_m)\Big).
\end{equation}
It is our main goal to find an explicit expression for this kernel. 
We emphasise that the correlation function~\eqref{corr-det} is invariant under gauge transformations
\begin{equation}
K_N(y,t,\vert\, x,s)\mapsto \frac{g(y,t)}{g(x,s)}K_N(y,t,\vert\, x,s)
\end{equation}
for nonzero functions $g$, thus any expression we give for the correlation kernel should only be understood up to overall gauge transformation. In order to give an expression for the kernel, we must first introduce the concept of bi-orthogonal polynomials. 

Let $t>s>0$ be positive integers, we want to introduce families of monic bi-orthogonal polynomials $\{p_k^{(t,s)}(y)=y^k+\cdots\}_{k=0,1,\ldots}$ and $\{q_\ell^{(t,s)}(x)=x^\ell+\cdots\}_{\ell=0,1,\ldots}$ defined through the relation
\begin{equation}\label{bi-rel}
\int_{\R^2} p_k^{(t,s)}(y)\kappa_{t-s}(y-x)q_\ell^{(t,s)}(x) w_{N(s-1),1}(x)dxdy=h_\ell^{(t,s)}\delta_{k,\ell},
\end{equation}
where $\{h_\ell^{(t,s)}>0\}_{\ell=0,1,\ldots}$ is a family of positive constants.

We define the moments and moment determinant as 
\begin{equation}\label{moment-def}
M_{k,\ell}^{(t,s)}:=\int_{\R^2} y^k\kappa_{t-s}(y-x)x^\ell w_{N(s-1),1}(x)dxdy 
\qquad\text{and}\qquad
D_{n}^{(t,s)}:=\det_{0\leq k,\ell\leq {n}}\Big(M_{k,\ell}^{(t,s)}\Big)
\end{equation}
for $k,\ell,n\geq 0$. If the moment determinant is positive $D_{n}^{(t,s)}>0$ for all $n$, then it is known from the general theory of bi-orthogonal polynomials that such bi-orthogonal polynomials exists, see e.g.~\cite{Borodin1998}. Furthermore, we have 
\begin{equation}\label{pq-moments}
 p_n^{(t)}(y)=\frac{1}{D_{n-1}^{(t,s)}}
\det_{\substack{k=1,\ldots,n\\ \ell=1,\ldots,n-1}}
\left(
\begin{array}{c|c}
M_{k,\ell}^{(t,s)} &
y^k
\end{array}
\right)
,\qquad
 q_n^{(s)}(x)=\frac{1}{D_{n-1}^{(t,s)}}
\det_{\substack{k=1,\ldots,n-1\\ \ell=1,\ldots,n}}
\left(
\begin{array}{c}
M_{k,\ell}^{(t,s)} \\
\hline
x^\ell
\end{array}
\right),
\end{equation}
and
\begin{equation}\label{h-moments}
h_n^{(t,s)}=D_n^{(t,s)}/D_{n-1}^{(t,s)}
\end{equation}
for $k,\ell,n\geq0$ and $D_{-1}^{(t,s)}:=1$.

To give the expression for the correlation kernel, we need two additional families of functions defined as integral transforms of the bi-orthogonal polynomials $p_k^{(t,s)}$ and $q_\ell^{(t,s)}$,
\begin{align}
\label{phi-def}
P_k^{(t,u,s)}(z)&:=
\begin{cases}
\displaystyle\int_{\R} p_k^{(t,s)}(y)\kappa_{t-u}(y-z)dy,\hphantom{w_{N(s-1),1}(x)}  & s\leq u<t \\
\,\hphantom{\displaystyle\int_{\R}} p_k^{(t,s)}(z), & u=t
\end{cases}
\\
\label{psi-def}
 Q_\ell^{(t,u,s)}(z)&:=
\begin{cases}
\displaystyle\int_{\R} \kappa_{u-s}(z-x)q_\ell^{(t,s)}(x) w_{N(s-1),1}(x)dx, & s<u\leq t \\
\,\hphantom{\displaystyle\int_{\R}} q_\ell^{(t,s)}(z) w_{N(s-1),1}(z), & s=u
\end{cases}
\end{align}
for $t\geq u\geq s$.
It follows from our original bi-orthogonality relation~\eqref{bi-rel} that the functions $P_k^{(t,u,s)}$ and $ Q_\ell^{(t,u,s)}$ are bi-orthogonal in the following sense.

\begin{corollary}
Let $t>s>0$ be positive integers and $t\geq u\geq s$, then we have
\begin{equation}\label{phi-psi-ortho}
\int_\R P_k^{(t,u,s)}(z) Q_\ell^{(t,u,s)}(z)dz=h_\ell^{(t,s)}\delta_{k\ell}
\end{equation}
for all $k,\ell=0,1,\ldots$.
\end{corollary}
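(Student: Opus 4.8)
The plan is to unfold the definitions~\eqref{phi-def} and~\eqref{psi-def} of $P_k^{(t,u,s)}$ and $Q_\ell^{(t,u,s)}$ inside the integral $\int_\R P_k^{(t,u,s)}(z)Q_\ell^{(t,u,s)}(z)\,dz$, interchange the order of integration, collapse the $z$-integral using the semigroup (convolution) property~\eqref{kappa-convolution} of $\kappa$, and thereby reduce the claim to the original bi-orthogonality relation~\eqref{bi-rel}. Concretely, I would split the argument into three cases according to whether $u$ lies strictly between $s$ and $t$ or equals one of the endpoints.

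In the interior case $s<u<t$, both $P_k^{(t,u,s)}$ and $Q_\ell^{(t,u,s)}$ are genuine integral transforms, and substituting them gives
\begin{equation*}
\int_\R P_k^{(t,u,s)}(z)Q_\ell^{(t,u,s)}(z)\,dz
=\int_{\R^3} p_k^{(t,s)}(y)\,\kappa_{t-u}(y-z)\,\kappa_{u-s}(z-x)\,q_\ell^{(t,s)}(x)\,w_{N(s-1),1}(x)\,dx\,dy\,dz.
\end{equation*}
Doing the $z$-integration first and invoking~\eqref{kappa-convolution} — which applies precisely because $t-u>0$ and $u-s>0$, so $\kappa_{t-u}$ and $\kappa_{u-s}$ are bona fide gamma densities — replaces $\int_\R\kappa_{t-u}(y-z)\kappa_{u-s}(z-x)\,dz$ by $\kappa_{t-s}(y-x)$, and what remains is exactly the left-hand side of~\eqref{bi-rel}, hence equal to $h_\ell^{(t,s)}\delta_{k\ell}$. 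The two boundary cases are even shorter. When $u=t$ one has $P_k^{(t,t,s)}=p_k^{(t,s)}$, so the integral is $\int_\R p_k^{(t,s)}(z)Q_\ell^{(t,t,s)}(z)\,dz$, and unfolding $Q_\ell^{(t,t,s)}$ (whose kernel is now $\kappa_{t-s}$) gives directly the left-hand side of~\eqref{bi-rel}. When $u=s$ one has $Q_\ell^{(t,s,s)}=q_\ell^{(t,s)}\,w_{N(s-1),1}$, so unfolding $P_k^{(t,s,s)}$ (again with kernel $\kappa_{t-s}$) and interchanging the two integrations recovers~\eqref{bi-rel} once more.

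The only point that needs a word of justification is the use of Fubini's theorem to interchange the order of integration, and this is harmless: $\kappa$ is a gamma density, hence supported on $\R_+$ and exponentially decaying, $p_k^{(t,s)}$ and $q_\ell^{(t,s)}$ are polynomials, and $w_{N(s-1),1}$ is again a gamma density, so the integrand is a product of polynomials with exponentially decaying factors and is absolutely integrable over $\R^3$ (respectively $\R^2$ in the boundary cases). I do not expect any genuine obstacle here — the statement is essentially a bookkeeping consequence of~\eqref{bi-rel} together with the semigroup property~\eqref{kappa-convolution}, and the case split over the position of $u$ is the only thing to keep track of.
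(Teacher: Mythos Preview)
Your proposal is correct and follows essentially the same route as the paper's proof: unfold the definitions of $P_k^{(t,u,s)}$ and $Q_\ell^{(t,u,s)}$, interchange the integrations, and collapse the $z$-integral via the convolution identity~\eqref{kappa-convolution} to recover~\eqref{bi-rel}. If anything, you are slightly more thorough than the paper, which dismisses the boundary cases $u=s$ and $u=t$ as trivial and the interchange of integrals as ``straightforward to verify'', whereas you spell out each case and give an explicit absolute-integrability justification for Fubini.
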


\begin{proof}
The statement is trivial if either $u=s$ or $u=t$, so we assume that $t>u>s$. We have from~\eqref{phi-def} and~\eqref{psi-def} that
\begin{multline}
\int_\R P_k^{(t,u,s)}(z) Q_k^{(t,u,s)}(z)dz=\\
\int_\R
\Big(\int_{\R} p_k^{(t,s)}(y)\kappa_{t-u}(y-z)dy\Big)
\Big(\int_{\R} \kappa_{u-s}(z-x)q_\ell^{(t,s)}(x) w_{N(s-1),1}(x)dx\Big)
dz.
\end{multline}
It is straightforward to verify that the integrals are interchangeable, thus we have
\begin{multline}
\int_\R P_k^{(t,u,s)}(z) Q_k^{(t,u,s)}(z)dz=\\
\int_{\R^2}
p_k^{(t,s)}(y)\Big(\int_{\R} \kappa_{t-u}(y-z)\kappa_{u-s}(z-x)dz\Big)
q_\ell^{(t,s)}(x) w_{N(s-1),1}(x)
dxdy.
\end{multline}
Now using the convolutive property~\eqref{kappa-convolution}, we get
\begin{equation}
\int_\R P_k^{(t,u,s)}(z) Q_k^{(t,u,s)}(z)dz=
\int_{\R^2}p_k^{(t,s)}(y)\kappa_{t-s}(y-x)q_\ell^{(t,s)}(x) w_{N(s-1),1}(x)dxdy
\end{equation}
and the corollary follows from~\eqref{bi-rel}.
\end{proof}

With all of this notation in place, the Eynard--Mehta theorem~\cite[Theorem 13.1.1]{Mehta2004} tells us that the eigenvalues of the LUP described by the spatiotemporal correlation function~\eqref{corr} with kernel
\begin{equation}
K_N(y,u\,\vert\,x,v)=\sum_{k=0}^{N-1}\frac{ P_k^{(t,u,s)}(y) Q_k^{(t,v,s)}(x)}{h_k^{(t,s)}}
-\mathbbm 1_{v>u}\kappa_{v-u}(x-y),
\end{equation}
where $\mathbbm 1_{v>u}$ is the indicator function so that $\mathbbm 1_{v>u}=1$ for $v>u$ and $\mathbbm 1_{v>u}=0$ for $v\leq u$.

It remains to establish expression for the functions $ P_k^{(t,u,s)}$, $ Q_k^{(t,u,s)}$, and the constants $h_\ell^{(t,s)}$, which will be the topic for the rest of this section.

\begin{proposition}
Let $t>s>0$ be positive integers. The moments and moment determinant defined by~\eqref{moment-def} are given by
 \begin{equation}\label{moments}
M_{k,\ell}^{(t,s)}=\frac{\Gamma(N(s-1)+\ell+1)}{\Gamma(N(s-1)+1)}\frac{\Gamma(N(t-1)+k+\ell+1)}{\Gamma(N(t-1)+\ell+1)}
\end{equation}
and
\begin{equation}\label{moments-det}
D_{n}^{(t,s)}=\prod_{k=0}^{n}\frac{\Gamma(N(s-1)+k+1)\Gamma(k+1)}{\Gamma(N(s-1)+1)}>0
\end{equation}
for $k,\ell,n\geq 0$.
\end{proposition}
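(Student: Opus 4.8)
The plan is to compute the double integral defining $M_{k,\ell}^{(t,s)}$ directly by exploiting the fact that both $\kappa_{t-s}$ and $w_{N(s-1),1}$ are gamma densities, and then to recognise the resulting moment matrix as a product of triangular matrices whose determinant is read off immediately. First I would perform the $y$-integral: since $\kappa_{t-s}(y-x)=w_{N(t-s)-1,1}(y-x)$ is supported on $y>x$, the substitution $y=x+r$ with $r>0$ gives $\int_0^\infty (x+r)^k \frac{r^{N(t-s)-1}e^{-r}}{\Gamma(N(t-s))}\,dr$. Expanding $(x+r)^k$ by the binomial theorem and using $\int_0^\infty r^{m}e^{-r}\,dr=\Gamma(m+1)$ turns this into a polynomial in $x$ of degree $k$, namely $\sum_{j=0}^k \binom{k}{j}\frac{\Gamma(N(t-s)+j)}{\Gamma(N(t-s))}x^{k-j}$; equivalently this is the well-known fact that the $k$-th moment of $x+\Gamma(N(t-s))$ is a specific polynomial in $x$. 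Then I would carry out the $x$-integral against $w_{N(s-1),1}$, using $\int_0^\infty x^{p}\,w_{N(s-1),1}(x)\,dx=\frac{\Gamma(N(s-1)+p+1)}{\Gamma(N(s-1)+1)}$, which reduces $M_{k,\ell}^{(t,s)}$ to a finite sum of ratios of gamma functions.

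The next step is to collapse that sum into the clean closed form \eqref{moments}. The cleanest route is not to expand at all but to integrate in a smarter order: first do the $x$-integral of $x^\ell\kappa_{t-s}(y-x)w_{N(s-1),1}(x)$ — this is a Beta-type integral on $0<x<y$ giving a constant times $y^{N(s-1)+\ell+\text{(power)}}$ — and then integrate $y^k$ times that against nothing, i.e. a single gamma integral over $y>0$. Concretely, $\int_0^y x^\ell (y-x)^{N(t-s)-1} x^{N(s-1)}e^{-x}\cdots$ — here one must be slightly careful because of the $e^{-x}$ factor, so the Beta identity does not apply verbatim. The honest way is therefore the order above (y first, then x); after the binomial expansion one is left with $\sum_{j}\binom{k}{j}\frac{\Gamma(N(t-s)+j)}{\Gamma(N(t-s))}\cdot\frac{\Gamma(N(s-1)+k-j+\ell+1)}{\Gamma(N(s-1)+1)}$, and I would verify this telescopes to $\frac{\Gamma(N(s-1)+\ell+1)}{\Gamma(N(s-1)+1)}\frac{\Gamma(N(t-1)+k+\ell+1)}{\Gamma(N(t-1)+\ell+1)}$ by the Chu--Vandermonde identity, after writing every Pochhammer-type ratio in terms of rising factorials based at $N(s-1)+\ell+1$ and using $N(t-1)=N(t-s)+N(s-1)$. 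This Chu--Vandermonde collapse is the one genuinely non-routine step; everything around it is bookkeeping.

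For the moment determinant I would avoid summation entirely. Observe from \eqref{moments} that $M_{k,\ell}^{(t,s)}$ factors as $A_\ell\cdot B_{k,\ell}$ where $A_\ell=\frac{\Gamma(N(s-1)+\ell+1)}{\Gamma(N(s-1)+1)}$ depends only on $\ell$ and hence pulls out of the determinant as $\prod_{\ell=0}^n A_\ell$, while $B_{k,\ell}=\frac{\Gamma(N(t-1)+k+\ell+1)}{\Gamma(N(t-1)+\ell+1)}=(N(t-1)+\ell+1)_k$ is a monic polynomial of degree $k$ in the variable $\ell$ (equivalently in $m:=N(t-1)+\ell+1$). A matrix whose $(k,\ell)$ entry is a monic degree-$k$ polynomial in a variable that runs over $n+1$ distinct values is, up to row operations that do not change the determinant, the Vandermonde matrix in those values; therefore $\det_{0\le k,\ell\le n}(B_{k,\ell})=\triangle$ of $\{N(t-1)+\ell+1:\ell=0,\dots,n\}=\prod_{0\le \ell<\ell'\le n}(\ell'-\ell)=\prod_{k=0}^n k!\,=\prod_{k=0}^n\Gamma(k+1)$. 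Combining, $D_n^{(t,s)}=\prod_{\ell=0}^n A_\ell\cdot\prod_{k=0}^n\Gamma(k+1)=\prod_{k=0}^n\frac{\Gamma(N(s-1)+k+1)\Gamma(k+1)}{\Gamma(N(s-1)+1)}$, which is manifestly positive since every factor is, giving \eqref{moments-det}. I expect the main obstacle to be purely presentational: doing the Chu--Vandermonde summation cleanly so that the $t$-dependence and $s$-dependence separate exactly as in \eqref{moments}; the determinant step, once the factorisation is spotted, is immediate.
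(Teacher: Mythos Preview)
Your proposal is correct, but it takes a more laborious route to~\eqref{moments} than the paper does, and the reason is a small oversight. You considered doing the $x$-integral first and rejected it because of the $e^{-x}$ factor in $w_{N(s-1),1}(x)$; but $\kappa_{t-s}(y-x)$ carries a factor $e^{-(y-x)}$, and the two combine to $e^{-y}$, which is constant in $x$. Hence the inner integral over $0<x<y$ is a genuine Beta integral,
\[
\int_0^y x^{N(s-1)+\ell}(y-x)^{N(t-s)-1}\,dx
= y^{N(t-1)+\ell}\,\frac{\Gamma(N(s-1)+\ell+1)\,\Gamma(N(t-s))}{\Gamma(N(t-1)+\ell+1)},
\]
after which a single gamma integral in $y$ gives~\eqref{moments} directly, with no binomial expansion and no Chu--Vandermonde. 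This is exactly what the paper does. Your approach via $y=x+r$, binomial expansion, and the Vandermonde summation $\sum_j\binom{k}{j}(a)_j(b)_{k-j}=(a+b)_k$ with $a=N(t-s)$, $b=N(s-1)+\ell+1$ is correct and does collapse to the stated formula, but it is the long way around.

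For the determinant~\eqref{moments-det} your argument is both correct and different from the paper's. The paper pulls out the same column factor $A_\ell$ you identify, but then evaluates $\det_{0\le k,\ell\le n}\big(\Gamma(N(t-1)+k+\ell+1)\big)$ by a cofactor-and-induction argument. Your observation that the residual entries $(N(t-1)+\ell+1)_k$ are monic polynomials of degree $k$ in $\ell$, so the matrix row-reduces to a Vandermonde with determinant $\prod_{k=0}^n k!$, is cleaner and yields the answer in one stroke; it also makes the $t$-independence of $D_n^{(t,s)}$ transparent.
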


\begin{proof}
By definition, we have
\begin{equation}
M_{k,\ell}^{(t,s)}=\frac1{\Gamma(N(t-s))\Gamma(N(s-1)+1)}
\int_0^\infty y^ke^{-y}\int_0^yx^{N(s-1)+\ell}(y-x)^{N(t-s)-1}dxdy
\end{equation}
and we know from~\cite[formula 3.191.1]{GR} that
\begin{equation}
\int_0^yx^{a-1}(y-x)^{b-1}dx=y^{a+b-1}\frac{\Gamma(a)\Gamma(b)}{\Gamma(a+b)}
\end{equation}
for $a,b>0$, so
\begin{equation}
M_{k,\ell}^{(t,s)}=\frac{\Gamma(N(s-1)+\ell+1)}{\Gamma(N(s-1)+1)\Gamma(N(t-1)+\ell+1)}
\int_0^\infty y^{N(t-1)+k+\ell}e^{-y}dy.
\end{equation}
Thus~\eqref{moments} follows by recognising the final integral as a Gamma function.

To prove~\eqref{moments-det}, we first note that
\begin{equation}
D_{n}^{(t,s)}=\prod_{j=0}^n \frac{\Gamma(N(s-1)+j+1)}{\Gamma(N(s-1)+1)\Gamma(N(t-1)+j+1)}
\det_{0\leq k,\ell\leq n}\Big(\Gamma(N(t-1)+k+\ell+1)\Big)
\end{equation}
so we only need to prove that
\begin{equation}\label{det-laguerre}
 \det_{0\leq k,\ell\leq n}\Big(\Gamma(N(t-1)+k+\ell+1)\Big)=
 \prod_{j=0}^n \Gamma(N(t-1)+j+1)\Gamma(j+1).
\end{equation}
We will only sketch the proof of this identity, since~\eqref{det-laguerre} is well-known from the theory of Laguerre polynomials (a very similar identity is used to go from~\eqref{L-p} and~\eqref{L-q} to~\eqref{L-sum} in the definition the monic Laguerre polynomial given below). Using the cofactor expansion together with the multiplicative property $z\Gamma(z)=\Gamma(z+1)$, it is seen that
\begin{equation}
 \det_{0\leq k,\ell\leq n}\Big(\Gamma(N(t-1)+k+\ell+1)\Big)=
 \Gamma(N(t-1)+n+1)\Gamma(n+1)\det_{0\leq k,\ell\leq n-1}\Big(\Gamma(N(t-1)+k+\ell+1)\Big).
\end{equation}
Thus~\eqref{det-laguerre} follows by induction, since the identity~\eqref{det-laguerre} is obviously true for $n=0$.
\end{proof}

\begin{corollary}
Let $t>s>0$ be positive integers. The constants~\eqref{h-moments} are given by
\begin{equation}
h_\ell^{(t,s)}=\ell!\frac{\Gamma(N(s-1)+\ell+1)}{\Gamma(N(s-1)+1)}
\end{equation}
for $\ell\geq 0$ and, thus, independent of the time $t$.
\end{corollary}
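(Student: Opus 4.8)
The plan is to simply substitute the explicit product formula~\eqref{moments-det} for the moment determinant into the relation~\eqref{h-moments}, which expresses $h_n^{(t,s)}$ as the ratio $D_n^{(t,s)}/D_{n-1}^{(t,s)}$. Since~\eqref{moments-det} exhibits $D_n^{(t,s)}$ as a product over $k=0,\dots,n$ of the factor $\Gamma(N(s-1)+k+1)\Gamma(k+1)/\Gamma(N(s-1)+1)$, the ratio of two consecutive such products telescopes and leaves only the single $k=n$ term.

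Concretely, I would first record that
\[
\frac{D_n^{(t,s)}}{D_{n-1}^{(t,s)}}
=\frac{\displaystyle\prod_{k=0}^{n}\frac{\Gamma(N(s-1)+k+1)\Gamma(k+1)}{\Gamma(N(s-1)+1)}}
      {\displaystyle\prod_{k=0}^{n-1}\frac{\Gamma(N(s-1)+k+1)\Gamma(k+1)}{\Gamma(N(s-1)+1)}}
=\frac{\Gamma(N(s-1)+n+1)\,\Gamma(n+1)}{\Gamma(N(s-1)+1)},
\]
valid for all $n\geq 0$ once we use the convention $D_{-1}^{(t,s)}:=1$ (for $n=0$ the empty product in the denominator is $1$, consistent with the displayed formula). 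Then I would rewrite $\Gamma(n+1)=n!$ and relabel $n\to\ell$ to obtain exactly the claimed expression $h_\ell^{(t,s)}=\ell!\,\Gamma(N(s-1)+\ell+1)/\Gamma(N(s-1)+1)$.

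Finally, to justify the last clause of the statement, I would observe that the right-hand side of~\eqref{moments-det}, and hence of the ratio above, contains no dependence on $t$ whatsoever; therefore $h_\ell^{(t,s)}$ is independent of $t$, and one may consistently write $h_\ell^{(s)}$ if desired. There is no real obstacle here: the proposition has already done all the work, and the only things to check are that the telescoping is carried out correctly and that the $n=0$ edge case matches the stated convention $D_{-1}^{(t,s)}=1$. \qed
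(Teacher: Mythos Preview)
Your proof is correct and is exactly the argument the paper has in mind: the paper's own proof is the single sentence ``Follows immediately from~\eqref{h-moments} and~\eqref{moments-det},'' and you have simply written out the telescoping ratio explicitly. There is nothing to add.
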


\begin{proof}
Follows immediately from~\eqref{h-moments} and~\eqref{moments-det}.
\end{proof}

Before we continue, it is worthwhile to recall the structure of the Laguerre polynomials. The (associated) Laguerre polynomials with index $a>-1$ are in monic normalisation are given by
\begin{align}
 \widetilde L^a_n(x)
&=\prod_{k=0}^{n-1}\frac{1}{\Gamma(k+1)\Gamma(a+k+1)}
\det_{\substack{k=1,\ldots,n\\ \ell=1,\ldots,n-1}}
\Big(
\begin{array}{c|c}
\Gamma(a+k+\ell+1) & x^k
\end{array}
\Big) \label{L-p}\\
&=\prod_{k=0}^{n-1}\frac{1}{\Gamma(k+1)\Gamma(a+k+1)}
\det_{\substack{k=1,\ldots,n-1\\ \ell=1,\ldots,n}}
\Big(
\begin{array}{c}
\Gamma(a+k+\ell+1) \\ \hline x^\ell
\end{array}
\Big) \label{L-q} \\
&=\sum_{k=0}^n\frac{(-1)^{n-k}n!}{(n-k)!}\frac{\Gamma(a+n+1)}{\Gamma(a+k+1)}\frac{x^k}{k!} \label{L-sum}
\end{align}
and by construction they are orthogonal
\begin{equation}\label{L-ortho}
\int_\R \widetilde L^a_k(x)\widetilde L^a_\ell(x)w_{a,1}(x)dx=r_\ell^a\delta_{k,\ell},
\qquad\text{with}\qquad 
r_\ell^a=\ell!\frac{\Gamma(a+\ell+1)}{\Gamma(a+1)}.
\end{equation}
We note that
\begin{equation}\label{r}
 r_\ell^{N(s-1)}=\ell!\frac{\Gamma(N(s-1)+\ell+1)}{\Gamma(a+1)}=h_\ell^{(t,s)}.
\end{equation}
In fact, the connection to the Laguerre polynomial extents much further, as we will see through the two following propositions.

\begin{proposition}
Let $t>s$ be positive integers. The bi-orthogonal polynomials $p_n^{(t,s)}$ and $q_n^{(t,s)}$ are monic Laguerre polynomials with index $N(t-1)$ and $N(s-1)$, respectively. That is
\begin{equation}
p_n^{(t,s)}(y)=\widetilde L_n^{N(t-1)}(y)
\qquad\text{and}\qquad
q_n^{(t,s)}(x)=\widetilde L_n^{N(s-1)}(x),
\end{equation}
where the monic Laguerre polynomial $\widetilde L_n^a$ is given by~\eqref{L-sum}.
\end{proposition}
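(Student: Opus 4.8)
The plan is to read both families off the explicit determinantal formulas~\eqref{pq-moments}, feeding in the evaluation~\eqref{moments} of the moment matrix together with the Laguerre determinant identity~\eqref{det-laguerre}. The point of departure is that the moments admit two useful factorizations. On the one hand, \eqref{moments} has a product structure: $M_{k,\ell}^{(t,s)}=\lambda_\ell\,\Gamma(N(t-1)+k+\ell+1)$ with $\lambda_\ell:=\frac{\Gamma(N(s-1)+\ell+1)}{\Gamma(N(s-1)+1)\,\Gamma(N(t-1)+\ell+1)}$ depending on the column index $\ell$ only. On the other hand, by the definition~\eqref{moment-def} of $M_{k,\ell}^{(t,s)}$ one can write $M_{k,\ell}^{(t,s)}=\int_\R\phi_k(x)\,x^\ell\,w_{N(s-1),1}(x)\,dx$, where $\phi_k(x):=\int_\R y^k\kappa_{t-s}(y-x)\,dy$; a one-line computation (substitute $u=y-x$, expand $(x+u)^k$, and use that $N(t-s)\geq N\geq1>0$) shows that $\phi_k$ is a \emph{monic polynomial of degree $k$}.

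For $p_n^{(t,s)}$ I would use the first factorization. In the numerator determinant of~\eqref{pq-moments} the scalar $\lambda_\ell$ factors out of the $\ell$-th column for each $\ell\leq n-1$, the appended column of powers of $y$ being untouched; the same product of scalars factors out of $D_{n-1}^{(t,s)}$, so the two cancel, leaving the ratio of a bordered determinant of entries $\Gamma(N(t-1)+k+\ell+1)$ over the plain determinant of such entries. Evaluating the latter by~\eqref{det-laguerre} with $a=N(t-1)$ produces exactly the prefactor $\prod_{j=0}^{n-1}\frac{1}{\Gamma(j+1)\Gamma(N(t-1)+j+1)}$ appearing in~\eqref{L-p}, whence $p_n^{(t,s)}(y)=\widetilde L_n^{N(t-1)}(y)$.

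For $q_n^{(t,s)}$ the column-scaling trick is unavailable, since the appended \emph{row} of powers of $x$ carries no $\ell$-dependent factor, so I would use the second factorization instead. Because $\phi_k(x)=x^k+(\text{lower order})$, the row $(M_{k,\ell}^{(t,s)})_\ell$ is a unit lower-triangular combination of the rows of the Hankel moment matrix $N_{k,\ell}:=\int_\R x^{k+\ell}w_{N(s-1),1}(x)\,dx=\frac{\Gamma(N(s-1)+k+\ell+1)}{\Gamma(N(s-1)+1)}$. Performing the corresponding row operations inside the numerator determinant of~\eqref{pq-moments} — they have determinant $1$ and do not touch the appended bottom row $x^\ell$ — replaces $M_{k,\ell}^{(t,s)}$ by $N_{k,\ell}$ everywhere and turns $D_{n-1}^{(t,s)}$ into $\det_{0\leq k,\ell\leq n-1}(N_{k,\ell})$; extracting the common factor $\Gamma(N(s-1)+1)^{-1}$ from each Hankel row (which cancels against the denominator) and invoking~\eqref{det-laguerre} with $a=N(s-1)$, the result matches~\eqref{L-q} and yields $q_n^{(t,s)}(x)=\widetilde L_n^{N(s-1)}(x)$.

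I expect the $q_n^{(t,s)}$ computation to be the main obstacle: one must recognize that the object to be factored out is the change-of-basis matrix $x^k\mapsto\phi_k(x)$ rather than a diagonal scaling, verify that $\phi_k$ is genuinely monic of degree $k$, and keep careful track of the row operations on the bordered determinant. A determinant-free alternative is available but trades one input for another: since $\phi_k$ is monic of degree $k$, the relation~\eqref{bi-rel} says that $\{\phi_k\}$ and $\{q_\ell^{(t,s)}\}$ are two families of monic polynomials bi-orthogonal with respect to the positive weight $w_{N(s-1),1}$, hence both coincide with the monic Laguerre polynomials $\widetilde L_n^{N(s-1)}$ by uniqueness of orthogonal polynomials; identifying $p_n^{(t,s)}$ this way would instead require the classical fractional-integral identity expressing $\int_\R\kappa_{t-s}(y-x)\,\widetilde L_\ell^{N(s-1)}(x)\,w_{N(s-1),1}(x)\,dx$ as a multiple of $\widetilde L_\ell^{N(t-1)}(y)\,w_{N(t-1),1}(y)$.
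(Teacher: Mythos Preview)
Your proof is correct. The treatment of $p_n^{(t,s)}$ is essentially identical to the paper's: both pull the column-only factor $\lambda_\ell=1/c_\ell^{(t,s)}$ out of the moment columns in numerator and denominator of~\eqref{pq-moments}, watch it cancel, and recognise the remaining bordered determinant as~\eqref{L-p}.

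For $q_n^{(t,s)}$ the two arguments diverge. The paper keeps the \emph{same} column factorisation $M_{k,\ell}^{(t,s)}=\Gamma(N(t-1)+k+\ell+1)/c_\ell^{(t,s)}$; since the appended bottom row $x^\ell$ carries no such factor, pulling $1/c_\ell^{(t,s)}$ out of every column effectively replaces that row by $c_\ell^{(t,s)}x^\ell$ and leaves an overall $1/c_n^{(t,s)}$. The resulting object is the Laguerre determinant~\eqref{L-q} with the substitution $x^\ell\mapsto c_\ell^{(t,s)}x^\ell$, and the paper then passes through the explicit sum~\eqref{L-sum} and simplifies the ratio $c_k^{(t,s)}/c_n^{(t,s)}$ to arrive at $\widetilde L_n^{N(s-1)}$. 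Your route instead exploits a \emph{row} factorisation: integrating over $y$ first produces the monic-degree-$k$ polynomial $\phi_k(x)$, so the rows of $(M_{k,\ell})$ are a unit lower-triangular transform of the Hankel rows $(N_{k,\ell})$ for the weight $w_{N(s-1),1}$, and unimodular row operations deliver~\eqref{L-q} directly without ever touching the coefficient formula~\eqref{L-sum}. Your approach is arguably cleaner and more robust --- it only uses that $\kappa_{t-s}$ is a probability density with finite moments, not its specific gamma form --- whereas the paper's approach has the virtue of using a single factorisation of $M_{k,\ell}^{(t,s)}$ throughout. Your aside on the determinant-free alternative via uniqueness of monic orthogonal polynomials is also valid and is, in effect, the conceptual content of your row-reduction argument stripped of the determinantal packaging.
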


\begin{proof}
We know that the bi-orthogonal polynomials $p_n^{(t,s)}$ and $q_n^{(t,s)}$ can be expressed in terms of the moments through \eqref{pq-moments}. 
For notational convenience, we introduce a constant
\begin{equation}\label{constant}
 c_\ell^{(t,s)}:=\frac{\Gamma(N(t-1)+\ell+1)\Gamma(N(s-1)+1)}{\Gamma(N(s-1)+\ell+1)}.
\end{equation}
Using~\eqref{constant} in our expression for the moments~\eqref{moments} and moment determinant~\eqref{moments-det}, we get 
\begin{equation}
M_{k,\ell}^{(t,s)}=\frac{\Gamma(N(t-1)+k+\ell+1)}{c_\ell^{(t,s)}}
\qquad\text{and}\qquad
D_{n-1}^{(t,s)}=\prod_{k=0}^{n-1}\frac{\Gamma(N(t-1)+k+1)\Gamma(k+1)}{c_k^{(t,s)}}.
\end{equation}
Inserting these expression into~\eqref{pq-moments} yields
\begin{align}
 p_n^{(t,s)}(y)&=
\prod_{k=0}^{n-1}\frac{c_k^{(t,s)}}{\Gamma(N(t-1)+k+1)\Gamma(k+1)}
\det_{\substack{k=1,\ldots,n\\ \ell=1,\ldots,n-1}}
\left(
\begin{array}{c|c}
\displaystyle\frac{\Gamma(N(t-1)+k+\ell+1)}{c_\ell^{(t,s)}} &
y^k
\end{array}
\right),
\\
 q_n^{(t,s)}(x)&=
\prod_{k=0}^{n-1}\frac{c_k^{(t,s)}}{\Gamma(N(t-1)+k+1)\Gamma(k+1)}
\det_{\substack{k=1,\ldots,n-1\\ \ell=1,\ldots,n}}
\left(
\begin{array}{c}
\displaystyle{\Gamma(N(t-1)+k+\ell+1)}/{c_\ell^{(t,s)}} \\
\hline
x^\ell
\end{array}
\right),
\end{align}
which after a simple manipulation reads
\begin{align}
 p_n^{(t,s)}(y)&=
\prod_{k=0}^{n-1}\frac{1}{\Gamma(N(t-1)+k+1)\Gamma(k+1)}
\det_{\substack{k=1,\ldots,n\\ \ell=1,\ldots,n-1}}
\left(
\begin{array}{c|c}
\Gamma(N(t-1)+k+\ell+1) &
y^k
\end{array}
\right),
\\
 q_n^{(t,s)}(x)&=
 \frac1{c_n^{(t,s)}}
\prod_{k=0}^{n-1}\frac{1}{\Gamma(N(t-1)+k+1)\Gamma(k+1)}
\det_{\substack{k=1,\ldots,n-1\\ \ell=1,\ldots,n}}
\left(
\begin{array}{c}
\Gamma(N(t-1)+k+\ell+1) \\
\hline
c_\ell^{(t,s)}x^\ell
\end{array}
\right).
\end{align}
We note that the dependence on the constant~\eqref{constant} have disappeared from our expression of the polynomial $p^{(t)}_n$, and by comparison with~\eqref{L-p} we see that $p^{(t,s)}_n$ is the Laguerre polynomial with index $N(t-1)$ as we set out to prove. The dependence on the constant~\eqref{constant} is still present in our expression of the polynomial $q^{(t,s)}_n$, which makes evaluation a little trickier. By comparison with~\eqref{L-q}, we notice that $q^{(t,s)}_n$ is related to the Laguerre polynomial by a transformation $x^\ell\mapsto c_\ell^{(t,s)}x^\ell$ and multiplication with an overall constant $1/c_n^{(t,s)}$. It follows that we have
\begin{equation}
 q_n^{(t,s)}(x)=
 \frac1{c_n^{(t,s)}}\sum_{k=0}^n\frac{(-1)^{n-k}n!}{(n-k)!}\frac{\Gamma(N(t-1)+n+1)}{\Gamma(N(t-1)+k+1)}
\frac{c_k^{(t,s)}x^k}{k!}.
\end{equation}
Inserting the expression~\eqref{constant} yields
\begin{equation}
 q_n^{(t,s)}(x)=
\sum_{k=0}^n\frac{(-1)^{n-k}n!}{(n-k)!}\frac{\Gamma(N(s-1)+n+1)}{\Gamma(N(s-1)+k+1)}\frac{x^k}{k!},
\end{equation}
which we recognise as the Laguerre polynomial with index $N(s-1)$. This completes the proof.
\end{proof}

\begin{proposition}
Let $t>s$ be positive integers, and let $t\geq u\geq s$. The bi-orthogonal functions $ P_k^{(t,u,s)}$ and $ Q_\ell^{(t,u,s)}$ are given by
\begin{equation}
 P_k^{(t,u,s)}(z)=\widetilde L_k^{N(u-1)}(z)
\qquad\text{and}\qquad
 Q_\ell^{(t,u,s)}(z)=\frac{r_\ell^{N(s-1)}}{r_\ell^{N(u-1)}}\widetilde L_\ell^{N(u-1)}(z)w_{N(u-1),1}(z),
\end{equation}
where $\widetilde L_k^a$ is the monic Laguerre polynomial with index $a$ as defined by~\eqref{L-sum} and $r_\ell^a$ are the norms which appear in the orthogonality relation for the monic Laguerre polynomials~\eqref{L-ortho}.
\end{proposition}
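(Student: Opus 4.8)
The plan is to substitute the explicit Laguerre formulas just proved, $p_k^{(t,s)}=\widetilde L_k^{N(t-1)}$ and $q_\ell^{(t,s)}=\widetilde L_\ell^{N(s-1)}$, into the definitions \eqref{phi-def}--\eqref{psi-def} and thereby reduce both claims to classical convolution identities for Laguerre polynomials. Two cases are immediate: at $u=t$ we have $P_k^{(t,t,s)}=p_k^{(t,s)}=\widetilde L_k^{N(t-1)}$, and at $u=s$ we have $Q_\ell^{(t,s,s)}=q_\ell^{(t,s)}\,w_{N(s-1),1}=\widetilde L_\ell^{N(s-1)}\,w_{N(s-1),1}$, both matching the asserted expressions since $r_\ell^a/r_\ell^a=1$. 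So it remains to treat a genuine convolution, i.e.\ $s<u$ in the formula for $Q_\ell$ and $u<t$ in the formula for $P_k$. I would settle $Q_\ell$ first by a direct computation and then obtain $P_k$ from it via the bi-orthogonality \eqref{phi-psi-ortho} already in hand.

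\emph{The function $Q_\ell^{(t,u,s)}$.} Because $\kappa_{u-s}(z-x)=w_{N(u-s)-1,1}(z-x)$ vanishes for $x\ge z$ and $w_{N(s-1),1}(x)$ vanishes for $x\le 0$, the defining integral \eqref{psi-def} is the finite integral
\[
 Q_\ell^{(t,u,s)}(z)=\frac{e^{-z}}{\Gamma(N(u-s))\,\Gamma(N(s-1)+1)}\int_0^z (z-x)^{N(u-s)-1}\,x^{N(s-1)}\,\widetilde L_\ell^{N(s-1)}(x)\,dx ,\qquad z>0.
\]
Expanding $\widetilde L_\ell^{N(s-1)}$ via \eqref{L-sum} and integrating term by term with the Euler beta integral $\int_0^z (z-x)^{b-1}x^{c-1}\,dx=z^{b+c-1}\Gamma(b)\Gamma(c)/\Gamma(b+c)$ used in the previous proof, the $\Gamma(N(s-1)+k+1)$ factors cancel pairwise, $N(u-s)+N(s-1)=N(u-1)$, and by \eqref{L-sum} read backwards the integral equals a constant times $z^{N(u-1)}\,\widetilde L_\ell^{N(u-1)}(z)$. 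Writing $z^{N(u-1)}e^{-z}=\Gamma(N(u-1)+1)\,w_{N(u-1),1}(z)$ and reading off the leftover constant as $r_\ell^{N(s-1)}/r_\ell^{N(u-1)}$ from the value of $r_\ell^a$ in \eqref{L-ortho}, one obtains $Q_\ell^{(t,u,s)}(z)=(r_\ell^{N(s-1)}/r_\ell^{N(u-1)})\,\widetilde L_\ell^{N(u-1)}(z)\,w_{N(u-1),1}(z)$, the claim.

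\emph{The function $P_k^{(t,u,s)}$.} After the shift $y=z+v$, \eqref{phi-def} becomes $P_k^{(t,u,s)}(z)=\int_0^\infty \widetilde L_k^{N(t-1)}(z+v)\,\kappa_{t-u}(v)\,dv$, which is a polynomial in $z$. Feeding the formula for $Q_\ell^{(t,u,s)}$ just found into the bi-orthogonality \eqref{phi-psi-ortho}, and using $h_\ell^{(t,s)}=r_\ell^{N(s-1)}$ from \eqref{r}, gives
\[
 \int_\R P_k^{(t,u,s)}(z)\,\widetilde L_\ell^{N(u-1)}(z)\,w_{N(u-1),1}(z)\,dz=r_\ell^{N(u-1)}\,\delta_{k\ell},\qquad \ell\ge 0.
\]
Expanding the polynomial $P_k^{(t,u,s)}$ in the Laguerre basis $\{\widetilde L_j^{N(u-1)}\}_{j\ge 0}$ and using the orthogonality \eqref{L-ortho} forces all of its coefficients to vanish save that of $\widetilde L_k^{N(u-1)}$, which equals $1$; hence $P_k^{(t,u,s)}=\widetilde L_k^{N(u-1)}$. (Alternatively one can evaluate $P_k^{(t,u,s)}$ head-on from the classical generating function $\sum_n\frac{(-1)^n}{n!}\widetilde L_n^a(x)\,t^n=(1-t)^{-a-1}e^{-xt/(1-t)}$: after $y=z+v$ the $v$-integral is elementary because $1+\tfrac{t}{1-t}=\tfrac{1}{1-t}$, and it merely multiplies the generating function by $(1-t)^{N(t-u)}$, lowering the Laguerre index from $N(t-1)$ to $N(u-1)$.)

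I do not expect a real obstacle here. The substance is just the two classical Laguerre identities above; the effort goes into bookkeeping with the three shifted indices $N(s-1),N(u-1),N(t-1)$ and the chain of Gamma-function ratios they generate, and the reason to obtain $P_k^{(t,u,s)}$ from $Q_\ell^{(t,u,s)}$ through bi-orthogonality rather than by directly resumming the shifted expansion of $\widetilde L_k^{N(t-1)}(z+v)$ is precisely to sidestep that combinatorics. The only points wanting a word of care are the supports of the gamma densities, which collapse the two defining integrals onto $(0,z)$ and $(z,\infty)$ respectively; the (trivial) interchange of the finite sum \eqref{L-sum} with the integral; and the observation that the resulting formulas are identities of polynomials — times $w_{N(u-1),1}$ — in $z$, hence valid on all of $\R$ even though the integral representations vanish for $z\le 0$.
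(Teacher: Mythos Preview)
Your approach is essentially the same as the paper's: you compute $Q_\ell^{(t,u,s)}$ directly from the defining convolution and then recover $P_k^{(t,u,s)}$ via the bi-orthogonality relation \eqref{phi-psi-ortho}. The only cosmetic difference is that where the paper invokes the tabulated identity $\int_0^z (z-x)^{\mu-1}x^{a}\widetilde L_\ell^{a}(x)\,dx=\tfrac{\Gamma(a+\ell+1)\Gamma(\mu)}{\Gamma(a+\mu+\ell+1)}\,z^{a+\mu}\widetilde L_\ell^{a+\mu}(z)$, you reprove it by expanding \eqref{L-sum} and applying the beta integral term by term; and your parenthetical generating-function route for $P_k$ is a pleasant extra not present in the paper.
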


\begin{proof}
By definition $ Q_\ell^{(t,u,s)}$ is given by~\eqref{psi-def}. Inserting our explicit expression for $\kappa_{u-s}$, $q_\ell^{(t,s)}$, and $w_{N(s-1),1}(x)$ into~\eqref{psi-def} yields
\begin{equation}\label{psi-temp}
 Q_\ell^{(t,u,s)}(z)=
\frac{e^{-z}}{\Gamma(N(u-s))\Gamma(N(s-1)+1)}
\int_0^z (z-x)^{N(u-s)-1}x^{N(s-1)}\widetilde L_\ell^{N(s-1)}(x)dx
\end{equation}
for $z>0$ and zero otherwise. Moreover, we know that~\cite[formula 7.412.1]{GR}
\begin{equation}
\int_0^z (z-x)^{\mu-1}x^{a}\widetilde L_\ell^{a}(x)dx
=\frac{\Gamma(a+\ell+1)\Gamma(\mu)}{\Gamma(a+\mu+\ell+1)}z^{a+\mu}\widetilde L_\ell^{a+\mu}(z)
\end{equation}
for $\mu>0$ and $a>-1$. Using this identity to evaluate the integral in~\eqref{psi-temp} gives
\begin{equation}\label{psi-complete}
 Q_\ell^{(t,u,s)}(z)=\frac{r_\ell^{N(s-1)}}{r_\ell^{N(u-1)}}\widetilde L_\ell^{N(u-1)}(z)w_{N(u-1),1}(z),
\end{equation}
where we used~\eqref{r} and~\eqref{weight}.

It remains to find $P_\ell^{(t,u,s)}$. We will use the bi-orthogonal relation~\eqref{phi-psi-ortho}. Inserting~\eqref{psi-complete} into~\eqref{phi-psi-ortho} gives
\begin{equation}
\int_\R  P_k^{(t,u,s)}(z)L_{\ell}^{N(u-1)}(z)w_{N(u-1),1}(z)dz=r_\ell^{N(u-1)}\delta_{k\ell}.
\end{equation}
This bi-orthogonal relation uniquely determines $P_\ell^{(t,u,s)}$ and we see that it is a monic Laguerre polynomial with index $N(u-1)$, which completes the proof.
\end{proof}

The results of this section is summarised by the following theorem.

\begin{theorem}
Let $t_n>\cdots>t_0$ be positive integers. The eigenvalues of the LUP are described by the correlation function~\eqref{corr-det} with kernel
\begin{equation}\label{kernel-laguerre}
K_N^\textup{Laguerre}(y,t\,\vert\, x,s)=
\sum_{k=0}^{N-1}\frac{\widetilde L_k^{N(t-1)}(y)\widetilde L_k^{N(s-1)}(x)}{r_k^{N(s-1)}}w_{N(s-1),1}(x)
-\mathbbm 1_{s>t}\kappa_{s-t}(x-y),
\end{equation}
where $\mathbbm 1_{t>s}$ is the indicator function, so that $\mathbbm 1_{t>s}=1$ for $t>s$ and $\mathbbm 1_{t>s}=0$ for $t\leq s$
\end{theorem}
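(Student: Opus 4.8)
The plan is to assemble the theorem directly from the material developed above; no new computation is required, only careful bookkeeping. The starting point is the Eynard--Mehta kernel
\begin{equation*}
K_N(y,u\,\vert\,x,v)=\sum_{k=0}^{N-1}\frac{ P_k^{(t,u,s)}(y) Q_k^{(t,v,s)}(x)}{h_k^{(t,s)}}
-\mathbbm 1_{v>u}\kappa_{v-u}(x-y)
\end{equation*}
recorded earlier, in which $s$ and $t$ are auxiliary boundary times while $u,v$ are the genuine time coordinates of the two space--time points. For the correlation function at the $n+1$ levels $t_n>\cdots>t_0$ I would fix $s:=t_0$ and $t:=t_n$, so that every level time lies in the interval $[s,t]$ and the functions $P_k^{(t,\cdot,s)}$, $Q_k^{(t,\cdot,s)}$ are defined at all of them; the resulting kernel then governs every entry of~\eqref{corr-det}.

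Next I would substitute the explicit expressions obtained in the two preceding propositions, namely $P_k^{(t,u,s)}(y)=\widetilde L_k^{N(u-1)}(y)$ and $Q_k^{(t,v,s)}(x)=\bigl(r_k^{N(s-1)}/r_k^{N(v-1)}\bigr)\widetilde L_k^{N(v-1)}(x)\,w_{N(v-1),1}(x)$, together with $h_k^{(t,s)}=r_k^{N(s-1)}$, which is exactly~\eqref{r} (equivalently, the corollary computing $h_\ell^{(t,s)}$ compared against~\eqref{L-ortho}). The common factor $r_k^{N(s-1)}=h_k^{(t,s)}$ then cancels between the numerator and denominator of each summand, leaving
\begin{equation*}
\frac{P_k^{(t,u,s)}(y)\,Q_k^{(t,v,s)}(x)}{h_k^{(t,s)}}
=\frac{\widetilde L_k^{N(u-1)}(y)\,\widetilde L_k^{N(v-1)}(x)}{r_k^{N(v-1)}}\,w_{N(v-1),1}(x).
\end{equation*}
Relabelling the time coordinates $u\mapsto t$, $v\mapsto s$ (the clash with the now-eliminated boundary times being harmless) turns the sum into the first term of~\eqref{kernel-laguerre} and turns $\mathbbm 1_{v>u}\kappa_{v-u}(x-y)$ into $\mathbbm 1_{s>t}\kappa_{s-t}(x-y)$, giving precisely the asserted expression.

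Finally I would record two remarks that make the statement well posed. First, the right-hand side of~\eqref{kernel-laguerre} no longer contains $t_0$ or $t_n$, so the kernel is independent of the auxiliary boundary times, as it must be; and in any event~\eqref{corr-det} only depends on the kernel up to the gauge transformations noted above. Second, as a consistency check, putting $s=t$ annihilates the indicator term and collapses~\eqref{kernel-laguerre} to $\sum_{k=0}^{N-1}\widetilde L_k^{N(t-1)}(y)\widetilde L_k^{N(t-1)}(x)\,w_{N(t-1),1}(x)/r_k^{N(t-1)}$, the standard Christoffel--Darboux projection kernel of the LUE with parameter $a=N(t-1)$, in agreement with Corollary~\ref{cor:properties}(iv) and~\eqref{jpdf}. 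There is essentially no obstacle in this argument; the only point demanding care is the double use of the letters $s,t$ and ensuring that the cancellation of $r_k^{N(s-1)}$ against $h_k^{(t,s)}$ is carried out before any relabelling.
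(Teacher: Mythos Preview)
Your proposal is correct and mirrors exactly what the paper does: the theorem is stated as a summary of the section, and its proof is precisely the substitution of the explicit expressions for $P_k^{(t,u,s)}$, $Q_k^{(t,v,s)}$, and $h_k^{(t,s)}$ from the preceding propositions and corollaries into the Eynard--Mehta kernel, followed by the cancellation of $r_k^{N(s-1)}$ against $h_k^{(t,s)}$ and the relabelling of the time variables. Your additional remarks on gauge independence and the equal-time Christoffel--Darboux check are consistent with the paper's own comments following the theorem.
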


For equal time correlations ($t=s$), the second term on the right-hand side in~\eqref{kernel-laguerre} is zero and the kernel becomes
\begin{align}
 K_N^\textup{Laguerre}(y\,\vert\, x)&=K_N^\textup{Laguerre}(y,t\,\vert\, x,t)=
 \sum_{k=0}^{N-1}\frac{\widetilde L_k^{N(t-1)}(y)\widetilde L_k^{N(t-1)}(x)}{r_k^{N(t-1)}}w_{N(t-1),1}(x)\\
 &=
 \begin{cases}
 \displaystyle
 \frac{\widetilde L_N^{N(t-1)}(y)\widetilde L_{N-1}^{N(t-1)}(x)-\widetilde L_{N-1}^{N(t-1)}(y)\widetilde L_{N}^{N(t-1)}(x)}
 {y-x}\frac{w_{N(t-1),1}(x)}{r_N^{N(t-1)}}, & x\neq y\\
 \displaystyle
 \Big(\tfrac d{dx}\widetilde L_N^{N(t-1)}(x)\widetilde L_{N-1}^{N(t-1)}(x)
 -\widetilde L_{N-1}^{N(t-1)}(x)\tfrac d{dx}\widetilde L_{N}^{N(t-1)}(x)\Big)
 \frac{w_{N(t-1),1}(x)}{r_N^{N(t-1)}}, & x= y
 \end{cases} \nn
\end{align}
which is the usual Laguerre Christoffel--Darboux kernel as it must be, since we know that the marginal distribution of the LUP is the LUE. We note that for unequal times ($t\neq s$), the first term on the right-hand side in~\eqref{kernel-laguerre} differ from the usual Laguerre Christoffel--Darboux kernel by the fact that the two Laguerre polynomials have different indices. 

\section{Long-time limit and Dyson Brownian motion}

The main purpose of this section is to consider a scaling limit which arises in the long-time limit. However, before we do so, we must recall the basic structure of Hermite polynomials. In monic normalisation, the Hermite polynomials reads
\begin{equation}\label{hermite-def}
\widetilde H_k(x)=
\sum_{j=0}^{\lfloor k/2\rfloor}\frac{(-1)^jk!}{j!(k-2j)!}\frac{x^{k-2j}}{2^{2j}}
\end{equation}
and they satisfy the orthogonality relation
\begin{equation}\label{hermite-ortho}
\int_\R \widetilde H_k(x)\widetilde H_\ell(x)e^{-x^2}dx=m_k\delta_{k\ell},\qquad
m_k=\sqrt\pi\frac{k!}{2^k}
\end{equation}
for $k,\ell=0,1,2,\ldots$. With this minimal notation out of the way, we are ready to state the main result of this section.

\begin{theorem}\label{thm:laguerre-to-hermite}
Let $N$ be a positive integer, then
\begin{equation}
 \lim_{\gamma\to\infty}\sqrt{N\gamma}K_N^\textup{Laguerre}
 \Big(\sqrt{N\gamma}y+N\gamma t,\gamma t\,\Big\vert\,\sqrt{N\gamma}x+N\gamma s,\gamma s\Big)=
 K_N^\textup{Hermite}(y,t\,\vert\,x,s)
 \end{equation}
 with
 \begin{equation}\label{kernel-hermite}
  K_N^\textup{Hermite}(y,t\,\vert\,x,s)=
  \begin{cases}
  \displaystyle
  +\frac1{\sqrt{2s}}\sum_{k=0}^{N-1}\bigg(\frac ts\bigg)^{\!\frac k2}\,
 \frac{\widetilde H_k\big(\frac y{\sqrt{2t}}\big)\widetilde H_k\big(\frac x{\sqrt{2s}}\big)}{m_k}
 e^{-\frac{x^2}{2s}}, & s\leq t,\\
 \displaystyle
 -\frac1{\sqrt{2s}}\sum_{k=N}^{\infty}\bigg(\frac ts\bigg)^{\!\frac k2}\,
 \frac{\widetilde H_k\big(\frac y{\sqrt{2t}}\big)\widetilde H_k\big(\frac x{\sqrt{2s}}\big)}{m_k}
 e^{-\frac{x^2}{2s}}, & s> t,
  \end{cases}
 \end{equation}
where $\widetilde H_k$ is the monic Hermite polynomial~\eqref{hermite-def} and $m_k$ are the constants appearing the orthogonality relation~\eqref{hermite-ortho}.
\end{theorem}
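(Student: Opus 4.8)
The plan is to substitute the rescaled arguments into the explicit formula~\eqref{kernel-laguerre} and pass to the limit, handling the finite sum $\sum_{k=0}^{N-1}$ and the correction term $\mathbbm 1_{s>t}\kappa_{s-t}$ separately, and then reconciling the two regimes $s\le t$ and $s>t$ by Mehler's bilinear generating formula for Hermite polynomials. Two classical asymptotic inputs are needed. The first is the limit relation between monic Laguerre and monic Hermite polynomials,
\begin{equation*}
\lim_{\alpha\to\infty}\alpha^{-k/2}\,\widetilde L_k^{\alpha}\big(\alpha+\sqrt{2\alpha}\,\xi\big)=2^{k/2}\,\widetilde H_k(\xi)
\qquad(k\text{ fixed}),
\end{equation*}
locally uniformly in $\xi$; I would read this off from the generating functions $\sum_k\widetilde L_k^{\alpha}(x)\frac{w^k}{k!}=(1+w)^{-\alpha-1}e^{xw/(1+w)}$ and $\sum_k\widetilde H_k(\xi)\frac{w^k}{k!}=e^{\xi w-w^2/4}$: the substitution $x=\alpha+\sqrt{2\alpha}\,\xi$, $w=v/\sqrt\alpha$ together with a second-order expansion of the logarithms turns the Laguerre series into $e^{\sqrt2\,\xi v-v^2/2}=\sum_k2^{k/2}\widetilde H_k(\xi)\frac{v^k}{k!}$, and as only $k\le N-1$ occurs one may compare coefficients. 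The second input is the local central limit theorem for the Gamma density, $\frac{z^{\alpha}e^{-z}}{\Gamma(\alpha+1)}=\frac1{\sqrt{2\pi\alpha}}e^{-(z-\alpha)^2/(2\alpha)}(1+o(1))$ when $z=\alpha+O(\sqrt\alpha)$, which comes from Stirling's formula and a Taylor expansion of $\alpha\log z-z$ at $z=\alpha$.

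Applying the first input with $\alpha=N(\gamma t-1)\sim N\gamma t$ and argument $\sqrt{N\gamma}\,y+N\gamma t=\alpha+\sqrt{2\alpha}\,\xi_\gamma$, where $\xi_\gamma=(\sqrt{N\gamma}\,y+N)/\sqrt{2N(\gamma t-1)}\to y/\sqrt{2t}$, gives $\widetilde L_k^{N(\gamma t-1)}(\sqrt{N\gamma}\,y+N\gamma t)=(2N\gamma t)^{k/2}\widetilde H_k(y/\sqrt{2t})(1+o(1))$, and likewise with $(t,y)$ replaced by $(s,x)$. Applying the second input to the weight yields $w_{N(\gamma s-1),1}(\sqrt{N\gamma}\,x+N\gamma s)=\frac1{\sqrt{2\pi N\gamma s}}e^{-x^2/(2s)}(1+o(1))$, while $z\Gamma(z)=\Gamma(z+1)$ gives $r_k^{N(\gamma s-1)}=k!\,(N\gamma s)^k(1+o(1))$. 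Inserting all of this into the $k$-th summand of the first term of~\eqref{kernel-laguerre} and multiplying by $\sqrt{N\gamma}$, every power of $N\gamma$ cancels: $\sqrt{N\gamma}$ is absorbed by the $1/\sqrt{N\gamma}$ from the weight, and $(N\gamma)^{k/2}(N\gamma)^{k/2}$ from the two Laguerre factors cancels $(N\gamma)^k$ from $r_k^{N(\gamma s-1)}$. Using $m_k=\sqrt\pi\,k!/2^k$, the limit of the $k$-th summand is exactly $\frac1{\sqrt{2s}}(t/s)^{k/2}\widetilde H_k(y/\sqrt{2t})\widetilde H_k(x/\sqrt{2s})e^{-x^2/(2s)}/m_k$; since the sum is finite the limit passes through it, and for $s\le t$, where the correction term is absent, this already is $K_N^{\textup{Hermite}}(y,t\vert x,s)$.

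For $s>t$ one also needs $\lim_{\gamma\to\infty}\big(-\sqrt{N\gamma}\,\kappa_{\gamma(s-t)}(\sqrt{N\gamma}(x-y)+N\gamma(s-t))\big)$. Since $\kappa_{\gamma(s-t)}=w_{N\gamma(s-t)-1,1}$ is again a Gamma density, now of shape $N\gamma(s-t)$ evaluated a displacement $\sqrt{N\gamma}(x-y)$ from its mean, the same local CLT gives $-\frac1{\sqrt{2\pi(s-t)}}e^{-(x-y)^2/(2(s-t))}$. It then remains to verify
\begin{equation*}
\frac1{\sqrt{2s}}\sum_{k=0}^{\infty}\Big(\tfrac ts\Big)^{\!k/2}\frac{\widetilde H_k\!\big(\tfrac y{\sqrt{2t}}\big)\widetilde H_k\!\big(\tfrac x{\sqrt{2s}}\big)}{m_k}\,e^{-\frac{x^2}{2s}}=\frac1{\sqrt{2\pi(s-t)}}\,e^{-\frac{(x-y)^2}{2(s-t)}},
\end{equation*}
which, with $\rho:=\sqrt{t/s}<1$ so that the series converges, is Mehler's formula $\sum_k\frac{H_k(u)H_k(v)}{k!}(\rho/2)^k=(1-\rho^2)^{-1/2}\exp\!\big(\frac{2uv\rho-(u^2+v^2)\rho^2}{1-\rho^2}\big)$ with $u=y/\sqrt{2t}$, $v=x/\sqrt{2s}$, after an elementary simplification of the Gaussian exponent (the $x^2$ terms combine to $-x^2/(2(s-t))$). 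Subtracting the $k\le N-1$ partial sum from both sides then rewrites the limit as $-\frac1{\sqrt{2s}}\sum_{k=N}^{\infty}(t/s)^{k/2}\widetilde H_k(\tfrac y{\sqrt{2t}})\widetilde H_k(\tfrac x{\sqrt{2s}})e^{-x^2/(2s)}/m_k$, i.e.\ $K_N^{\textup{Hermite}}$ in the regime $s>t$.

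The only genuinely delicate point is controlling the errors: one must check that the multiplicative $1+o(1)$ corrections from the Laguerre and Gamma asymptotics survive the cancellation of the large powers of $N\gamma$ (they do, being multiplicative), and one should note that the arguments of the Laguerre polynomials and of the Gamma density drift with $\gamma$, so the asymptotics must be invoked in a form uniform on the relevant compact $\xi$-sets. Everything else — the Laguerre--Hermite limit, the Gamma local CLT, and Mehler's formula — is classical, and the remaining manipulations are elementary algebra.
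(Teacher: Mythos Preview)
Your proposal is correct and follows essentially the same route as the paper: split the kernel into the finite Laguerre sum and the $\kappa_{s-t}$ correction, apply the three asymptotic inputs (Laguerre\,$\to$\,Hermite, $r_k^{a}\sim k!\,a^k$, and the Gamma local CLT for the weight) termwise, and then invoke Mehler's formula to rewrite the $s>t$ case as the tail series. The only cosmetic differences are in how the auxiliary limits are justified---you use generating functions for the Laguerre\,$\to$\,Hermite limit and Stirling/Taylor for the Gamma density, whereas the paper cites Calogero's zero asymptotics and a Fourier-transform argument, respectively---but the architecture of the proof is the same.
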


In order to make the proof of Theorem~\ref{thm:laguerre-to-hermite} more transparent, we will provide asymptotic formulae for the Laguerre polynomials $L_k^a$, the constants $r_k^a$, and the weight function $w_{a,1}$ as three separate lemmas before combining these three results to prove the theorem.  

\begin{lemma}\label{lemma:L2H}
Let $k\geq0$ be a fixed integer and $b\in\R$ a fixed constant, then we have
\begin{equation}\label{L2H}
 \lim_{a\to\infty}\frac1{(2a)^{k/2}}\widetilde L_k^{a+b}\big(\sqrt{2a}\,x+a\big)=\widetilde H_k(x),
\end{equation}
where $\widetilde L_k^{a}$ is the monic Laguerre polynomial~\eqref{L-sum} and $\widetilde H_k$ is the monic Hermite polynomial~\eqref{hermite-def}.
\end{lemma}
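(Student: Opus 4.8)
The plan is to pass through generating functions, where the cancellations responsible for the Laguerre-to-Hermite degeneration are handled automatically. Recall that the monic Laguerre polynomials \eqref{L-sum} are related to the classical ones by $\widetilde L_k^{\alpha}(x)=(-1)^k k!\,L_k^{(\alpha)}(x)$ (both have leading coefficient matching after this rescaling), so the standard identity $\sum_{k\ge0}L_k^{(\alpha)}(x)t^k=(1-t)^{-\alpha-1}e^{-xt/(1-t)}$ gives
\begin{equation}\label{Lgen}
\sum_{k\ge0}\widetilde L_k^{\alpha}(x)\,\frac{s^k}{k!}=(1+s)^{-\alpha-1}\exp\!\Big(\frac{xs}{1+s}\Big),\qquad |s|<1,
\end{equation}
while the monic Hermite polynomials \eqref{hermite-def} satisfy $\sum_{k\ge0}\widetilde H_k(x)\,s^k/k!=e^{xs-s^2/4}$.

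First I would specialise \eqref{Lgen} to $\alpha=a+b$, $x=\sqrt{2a}\,\xi+a$, and replace $s$ by $s/\sqrt{2a}$. The left-hand side becomes $\sum_{k\ge0}\big[(2a)^{-k/2}\widetilde L_k^{a+b}(\sqrt{2a}\,\xi+a)\big]s^k/k!$, i.e. exactly the generating function of the quantities whose limit is claimed in \eqref{L2H}. The right-hand side is $\exp\big(\Phi_a(s)\big)$ with
\begin{equation}
\Phi_a(s)=-(a+b+1)\log\!\Big(1+\tfrac{s}{\sqrt{2a}}\Big)+\frac{\big(\sqrt{2a}\,\xi+a\big)\,s/\sqrt{2a}}{1+s/\sqrt{2a}}.
\end{equation}
The core computation is the asymptotics of $\Phi_a$. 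Expanding the logarithm as $\tfrac{s}{\sqrt{2a}}-\tfrac{s^2}{4a}+O(a^{-3/2})$ and the second term (after expanding $1/(1+s/\sqrt{2a})$) as $s\sqrt{a/2}+\xi s-\tfrac{s^2}{2}+O(a^{-1/2})$, the two terms of order $\sqrt a$ — namely $-a\cdot\tfrac{s}{\sqrt{2a}}=-s\sqrt{a/2}$ coming from the logarithm and $+s\sqrt{a/2}$ coming from the second term — cancel identically, and what remains is $\Phi_a(s)\to \xi s-\tfrac{s^2}{4}$. This convergence is uniform for $s$ in a fixed disc about $0$ and $\xi$ in any fixed compact set, so $\exp(\Phi_a(s))\to e^{\xi s-s^2/4}=\sum_k\widetilde H_k(\xi)s^k/k!$ uniformly as well.

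To finish, I would extract coefficients by Cauchy's formula: for fixed $k$,
\[
\frac{1}{(2a)^{k/2}}\widetilde L_k^{a+b}(\sqrt{2a}\,\xi+a)=\frac{k!}{2\pi i}\oint_{|s|=r}\frac{e^{\Phi_a(s)}}{s^{k+1}}\,ds\;\longrightarrow\;\frac{k!}{2\pi i}\oint_{|s|=r}\frac{e^{\xi s-s^2/4}}{s^{k+1}}\,ds=\widetilde H_k(\xi),
\]
where the passage to the limit is justified by the uniform convergence of the integrand on $|s|=r$. This is precisely \eqref{L2H}.

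The only real obstacle — and it is bookkeeping rather than a conceptual difficulty — is making the cancellation of the $\sqrt a$ terms in $\Phi_a$ and the uniformity of the convergence on a fixed contour completely rigorous, so that the interchange of $\lim_{a\to\infty}$ with coefficient extraction is legitimate. One could instead substitute \eqref{L-sum} directly, binomial-expand $(\sqrt{2a}\,\xi+a)^m$, and read off the coefficient of each $\xi^p$; that works too, but then one must check by hand that the contributions which grow in $a$ cancel for $p<k$, which amounts to re-deriving the combinatorial identity that the generating function already encodes. For this reason I would present the generating-function argument.
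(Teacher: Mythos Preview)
Your proof is correct and genuinely different from the paper's. The paper argues via the zeros: it quotes Calogero's asymptotic expansion $y_{j,k}(a+b)=a+\sqrt{2a}\,x_{j,k}+O(1)$ for the roots of $\widetilde L_k^{a+b}$, writes the monic polynomial as $\prod_j(x-y_{j,k})$, rescales, and observes that the factors converge to $\prod_j(x-x_{j,k})=\widetilde H_k(x)$. Your generating-function route instead packages all $k$ at once and reduces the question to the elementary expansion of $\Phi_a$; the crucial $\sqrt a$-cancellation is then a two-line computation rather than a cited result. The trade-off is that the paper's proof is shorter \emph{given} Calogero's formula but is not self-contained, whereas yours is fully self-contained and would extend more readily (e.g. to sub-leading corrections or to uniformity in $\xi$) because everything is explicit. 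Both are clean; yours is arguably the more elementary of the two.
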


\begin{proof}
The lemma is minor modification of a known result~\cite[formula 18.7.26]{NIST} and, as we will see below, it follows as a straightforward generalisation of the original proof in~\cite{Calogero1978}.

Let $\{y_{j,k}(a)\}_{j=1,\ldots,k}$ and $\{x_{j,k}\}_{j=1,\ldots,k}$ denote the zeros of the $k$-th Laguerre polynomial with index $a$ and $k$-th Hermite polynomial, respectively. That is
\begin{equation}
\widetilde L_k^{a}(y_{j,k}(a))=0
\qquad\text{and}\qquad
\widetilde H_k(x_{j,k})=0
\end{equation}
for $j=1,\ldots,k$. We have
\begin{equation}\label{L-limit}
\frac1{(2a)^{k/2}}L_k^{a+b}\big(\sqrt{2a}\,x+a)
=\frac1{(2a)^{k/2}}\prod_{j=1}^k\big(\sqrt{2a}\,x+a-y_{j,k}(a+b)\big)
=\prod_{j=1}^k\Big(x-\frac{y_{j,k}(a+b)-a}{\sqrt{2a}}\Big).
\end{equation}
Moreover, we know from~\cite{Calogero1978} that
\begin{equation}\label{zeroes-limit}
y_{j,k}(a+b)=a+\sqrt{2a\,}x_{j,k}+\frac13(1+2k+2x_{j,k}^2+3b)+O(a^{-1/2}), \qquad a\to\infty.
\end{equation}
Combining~\eqref{L-limit} and~\eqref{zeroes-limit}, we see that the left-hand side of~\eqref{L2H} is a polynomial with the same zeros as the Hermite polynomial and leading coefficient one. In other words, it is the monic Hermite polynomial which is the statement of the lemma.
\end{proof}

\begin{lemma}\label{lemma:r-limit}
Let $k\geq0$ be a fixed integer and $b\in\R$ a fixed constant, then we have
\begin{equation}
\lim_{a\to\infty}\frac{r_k^{a+b}}{a^k}=k!,
\end{equation}
where $r_n^{a+b}$ is the squared norms from~\eqref{L-ortho}.
\end{lemma}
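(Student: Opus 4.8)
The plan is to reduce the statement to an elementary limit by using the closed form of the squared norms already recorded in~\eqref{L-ortho}. There we have $r_\ell^a=\ell!\,\Gamma(a+\ell+1)/\Gamma(a+1)$, so upon substituting $a\mapsto a+b$ we get
\begin{equation}
 r_k^{a+b}=k!\,\frac{\Gamma(a+b+k+1)}{\Gamma(a+b+1)}.
\end{equation}
The first step is then to rewrite this ratio of Gamma functions as a finite product: iterating the functional equation $z\Gamma(z)=\Gamma(z+1)$ exactly $k$ times gives
\begin{equation}
 \frac{\Gamma(a+b+k+1)}{\Gamma(a+b+1)}=\prod_{j=1}^{k}(a+b+j),
\end{equation}
which is valid for all $a$ with $a+b>0$, in particular for all $a$ large.

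The second step is to divide by $a^k$ and factor the product term by term,
\begin{equation}
 \frac{r_k^{a+b}}{a^k}=k!\prod_{j=1}^{k}\frac{a+b+j}{a}=k!\prod_{j=1}^{k}\Big(1+\frac{b+j}{a}\Big).
\end{equation}
Since $k$ is fixed, this is a finite product of factors each tending to $1$ as $a\to\infty$, so the right-hand side tends to $k!$, which is the claim.

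There is essentially no obstacle here: the lemma is a one-line consequence of the explicit formula for $r_\ell^a$, and the only thing to be mildly careful about is that the Gamma-function identity and the factorisation require $a+b>0$, which holds for all sufficiently large $a$ and hence does not affect the limit. (Alternatively one could invoke Stirling's asymptotics $\Gamma(z+c)/\Gamma(z)\sim z^{c}$ as $z\to\infty$ directly, but the elementary product argument above avoids any appeal to asymptotic analysis and makes the error terms manifestly harmless.)
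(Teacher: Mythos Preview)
Your proof is correct and follows essentially the same approach as the paper: both start from the explicit formula $r_k^{a+b}=k!\,\Gamma(a+b+k+1)/\Gamma(a+b+1)$ and then reduce to the large-$a$ asymptotics of this Gamma ratio. The only difference is cosmetic---the paper cites the asymptotic $\Gamma(z+c)/\Gamma(z)=z^c(1+O(z^{-1}))$ from~\cite[formula 5.11.13]{NIST}, whereas you expand the ratio as the finite product $\prod_{j=1}^k(a+b+j)$ and take the limit directly, which is the more elementary route you yourself flag at the end.
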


\begin{proof}
We know from~\cite[formula 5.11.13]{NIST} that
\begin{equation}
 r_k^{a+b}=k!\frac{\Gamma(a+b+k+1)}{\Gamma(a+b+1)}=k!a^k(1+O(a^{-1}))
\end{equation}
from which the lemma follows.
\end{proof}

\begin{lemma}\label{lemma:weight-limit}
Let $b\in\R$ be a fixed constant, then we have
\begin{equation}\label{weight-limit}
\lim_{a\to\infty}\sqrt{a} w_{a+b,1}(\sqrt ax+a)=\frac{e^{-x^2/2}}{\sqrt{2\pi}},
\end{equation}
where $w_{a,1}$ is the weight function~\eqref{weight}.
\end{lemma}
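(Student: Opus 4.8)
The plan is to prove Lemma~\ref{lemma:weight-limit} by direct computation, since the weight function~\eqref{weight} has a completely explicit form and the limit is a standard instance of the Gaussian approximation to the Gamma density (Stirling plus the exponential-of-a-logarithm trick). First I would write out, for $x>-\sqrt{a}$ so that the argument $\sqrt{a}\,x+a$ is positive,
\begin{equation}
\sqrt{a}\,w_{a+b,1}(\sqrt{a}\,x+a)=\frac{\sqrt{a}}{\Gamma(a+b+1)}(\sqrt{a}\,x+a)^{a+b}e^{-(\sqrt{a}\,x+a)},\nonumber
\end{equation}
and then factor $a$ out of the power term: $(\sqrt{a}\,x+a)^{a+b}=a^{a+b}\big(1+x/\sqrt{a}\big)^{a+b}$. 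The plan is to separate the $a$-dependence into (i) the ratio $\sqrt{a}\,a^{a+b}e^{-a}/\Gamma(a+b+1)$, which Stirling's formula $\Gamma(a+b+1)\sim\sqrt{2\pi}\,a^{a+b+1/2}e^{-a}$ (this is~\cite[formula 5.11.13]{NIST}, already invoked in Lemma~\ref{lemma:r-limit}) shows converges to $1/\sqrt{2\pi}$; and (ii) the factor $\big(1+x/\sqrt{a}\big)^{a+b}e^{-\sqrt{a}\,x}$.

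The heart of the matter is the second factor. Taking logarithms, $(a+b)\log\!\big(1+x/\sqrt{a}\big)-\sqrt{a}\,x$, and using the Taylor expansion $\log(1+u)=u-u^2/2+O(u^3)$ with $u=x/\sqrt{a}$, one finds
\begin{equation}
(a+b)\Big(\frac{x}{\sqrt{a}}-\frac{x^2}{2a}+O(a^{-3/2})\Big)-\sqrt{a}\,x
=\sqrt{a}\,x+\frac{bx}{\sqrt{a}}-\frac{x^2}{2}+O(a^{-1/2})-\sqrt{a}\,x,\nonumber
\end{equation}
which tends to $-x^2/2$ as $a\to\infty$ for each fixed $x$ and fixed $b$. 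Exponentiating and combining with part (i) gives the claimed limit $e^{-x^2/2}/\sqrt{2\pi}$. I would also note that for $x$ in the (eventually vacuous, as $a\to\infty$) range $x\le-\sqrt{a}$ the weight is zero, so the pointwise statement is unaffected.

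The only real subtlety — and the step I would flag as the main obstacle, though it is a mild one — is controlling the error terms in the logarithm expansion uniformly enough to conclude the limit: one must be sure that the $O(a^{-3/2})$ remainder in $\log(1+x/\sqrt{a})$, once multiplied by $a+b$, really is $o(1)$, which it is for each fixed $x$ since $(a+b)\cdot O(a^{-3/2})=O(a^{-1/2})$. Since the lemma is stated pointwise in $x$ (no uniformity is claimed, and none is needed for the proof of Theorem~\ref{thm:laguerre-to-hermite}, where this lemma is combined with Lemmas~\ref{lemma:L2H} and~\ref{lemma:r-limit} at fixed arguments), this causes no difficulty; I would simply remark that the expansion is to be understood for fixed $x$. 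An alternative, perhaps cleaner, route is to observe directly that $w_{a+b,1}(\sqrt{a}\,x+a)\,dx$ is (a reparametrisation of) the density of $(G-a)/\sqrt{a}$ where $G$ is Gamma$(a+b+1,1)$-distributed, and invoke the central limit theorem for Gamma laws; but since the paper's style favours explicit asymptotics, I would present the Stirling-plus-Taylor computation above.
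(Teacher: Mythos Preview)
Your proof is correct, but it takes a different route from the paper. The paper proceeds via Fourier transforms: it computes the characteristic function of the rescaled density,
\[
\sqrt{a}\int_\R w_{a+b,1}(\sqrt a\,x+a)e^{-i\xi x}\,dx
= e^{i\xi\sqrt a}\Big(1+\frac{i\xi}{\sqrt a}\Big)^{-(a+b+1)},
\]
observes that this tends to $e^{-\xi^2/2}$, and then infers the pointwise density limit. Your approach instead attacks the density directly via Stirling's approximation for $\Gamma(a+b+1)$ and a second-order Taylor expansion of $\log(1+x/\sqrt a)$.

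Both routes are standard and short. Your direct computation has the advantage of yielding pointwise convergence of densities immediately, whereas the Fourier argument strictly speaking gives convergence of characteristic functions (hence convergence in distribution), and the passage from that to the pointwise statement~\eqref{weight-limit} requires an extra word (e.g.\ a local limit argument or Scheff\'e-type reasoning) that the paper elides. The Fourier route, on the other hand, avoids Stirling entirely and makes the Gaussian limit drop out of a one-line expansion of the characteristic function. Either would be acceptable here; your remark that only pointwise convergence is needed for the application in Theorem~\ref{thm:laguerre-to-hermite} is apt.
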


\begin{proof}
Rather than proving the limit~\eqref{weight-limit} directly, we will prove the limit of the Fourier transform, i.e.
\begin{equation}\label{weight-limit-fourier}
\lim_{a\to\infty}\sqrt{a} \int_\R w_{a+b,1}(\sqrt ax+a)e^{-i\xi x}dx=e^{-\xi^2/2},
\end{equation}
from which our original statement~\eqref{weight-limit} follows. We know that
\begin{equation}
 \sqrt{a} \int_\R w_{a+b,1}(\sqrt ax+a)e^{-i\xi x}dx
 =e^{i\xi\sqrt a}\int_\R w_{a+b,1}(y)e^{-i\xi y/\sqrt a}dy=e^{i\xi\sqrt a}\Big(1+\frac{i\xi}{\sqrt{a}}\Big)^{-a-1}
\end{equation}
and
\begin{equation}
\Big(1+\frac{i\xi}{\sqrt{a}}\Big)^{-a-1}=e^{-i\xi\sqrt a-\xi^2/2}(1+O(a^{-1/2}))
\end{equation}
for $a\to\infty$. This implies~\eqref{weight-limit-fourier} and thereby~\eqref{weight-limit}. 
\end{proof}

\begin{proof}[Proof of Theorem~\ref{thm:laguerre-to-hermite}]
Combining Lemma~\ref{lemma:weight-limit}, \ref{lemma:L2H} and~\ref{lemma:r-limit}, we have
\begin{multline}\label{kernel-L2H}
 \lim_{\gamma\to\infty}\sqrt{N\gamma}K_N^\textup{Laguerre}
 \Big(\sqrt{N\gamma}y+N\gamma t,\gamma t\,\Big\vert\,\sqrt{N\gamma}x+N\gamma s,\gamma s\Big)=\\
 \frac1{\sqrt{2s}}\sum_{k=0}^{N-1}\bigg(\frac ts\bigg)^{\!\frac k2}\,
 \frac{\widetilde H_k\big(\frac y{\sqrt{2t}}\big)\widetilde H_k\big(\frac x{\sqrt{2s}}\big)}{m_k}e^{-x^2/2s}
 -\mathbbm 1_{s>t}\frac{e^{-(y-x)^2/2(s-t)}}{\sqrt{2\pi(s-t)}}.
\end{multline}
Thus, it remains to show that the right-hand side in this equation is equal to the kernel~\eqref{kernel-hermite}. For $s\leq t$, the second term on the right-hand side in~\eqref{kernel-L2H} is zero and the statement follows. Next we recall Mehler's formula~\cite[formula 18.18.28]{NIST}
\begin{equation}
 \sum_{k=0}^\infty\frac{\widetilde H_k(y)\widetilde H_k(x)}{2^{-k}k!} z^k 
 =\frac1{\sqrt{1-z^2}}\exp\Big(\frac{2xyz-(x^2+y^2)z^2}{1-z^2}\Big), \qquad z<1.
\end{equation}
Using Mehler's formula with $z=\sqrt{t/s}<1$, we get
\begin{equation}
 \frac1{\sqrt{2s}}\sum_{k=0}^\infty\bigg(\frac ts\bigg)^{\!\frac k2}\,
 \frac{\widetilde H_k\big(\frac y{\sqrt{2t}}\big)\widetilde H_k\big(\frac x{\sqrt{2s}}\big)}{m_k} 
 =\frac{e^{-(y-x)^2/2(s-t)}}{\sqrt{2\pi(s-t)}}e^{-x^2/2s}
\end{equation}
for $s>t$. Using this expression in~\eqref{kernel-L2H} completes the proof.
\end{proof}

The correlation kernel~\eqref{kernel-hermite} is (up to a trivial gauge transformation) equal to the so-called extended Hermite kernel (see~\cite[formula (3.112)]{Katori2016}), which describes ($\beta=2$) Dyson Brownian motion. Thus, the eigenvalues of the LUP becomes (a version of) Dyson Brownian Motion in the scaling limit described above. Furthermore, we know from studies of the extended Hermite kernel (see~\cite{Katori2016} Section~3.9.2 and~3.9.3 for details) that
\begin{equation}
\lim_{N\to\infty}\frac{e^{-y^2/4t}}{e^{-x^2/4s}}
K_N^\textup{Hermite}\Big(y,2t+\frac N{\pi^2}\,\Big\vert\,x,2s+\frac N{\pi^2}\Big)
=K^\textup{sine}(y,t\,\vert\,x,s)
\end{equation}
and
\begin{equation}
\lim_{N\to\infty}\frac{e^{-y^2/4t}}{e^{-x^2/4s}}
K_N^\textup{Hermite}\Big(y+x_*(t),t+N^{1/3}\,\Big\vert\,x+x_*(s),s+N^{1/3}\Big)
=K^\textup{Airy}(y,t\,\vert\,x,s)
\end{equation}
with
\begin{equation}
x_*(t)=2N^{1/3}+N^{1/3}t-\tfrac14t^2,
\end{equation}
where $K^\textup{sine}$ and $K^\textup{Airy}$ are the extended sine and Airy kernels given by
\begin{equation}\label{sine-extended}
K^\textup{sine}(y,t\,\vert\,x,s)
=
\begin{cases}
\displaystyle+\int_0^1 e^{-\pi^2u^2(s-t)}\cos\pi u(y-x)du, & s\leq t,\\
\displaystyle-\int_1^\infty e^{-\pi^2u^2(s-t)}\cos\pi u(y-x)du, & s>t,
\end{cases}
\end{equation}
and
\begin{equation}\label{airy-extended}
K^\textup{Airy}(y,t\,\vert\,x,s)
=
\begin{cases}
\displaystyle+\int_0^\infty e^{u(s-t)/2}\Ai(y+u)\Ai(x+u) du, & s\leq t,\\
\displaystyle-\int_{-\infty}^0 e^{u(s-t)/2}\Ai(y+u)\Ai(x+u) du, & s>t,
\end{cases}
\end{equation}
respectively. We note that the extended kernels~\eqref{sine-extended} and~\eqref{airy-extended} reduce to the usual sine and Airy kernel for equal time $t=s$.

\paragraph*{Acknowledgement:} We thank Peter Forrester for comments on a first draft of this paper. The author acknowledge financial support by ARC Centre of Excellence for Mathematical and Statistical frontiers (ACEMS).

\appendix

\section{Proof of addition property for the LUE}
\label{appendixA}

The purpose of this appendix is to prove Theorem~\ref{thm:sum}. Our proof is based the `Fourier transform approach' from~\cite{KR2016}; this approach is closely related to a result from~\cite{FR2005} as pointed out in~\cite{FILZ2019}. A statement about sums of LUE random matrices was also presented in~\cite{Kumar2015} (but without a complete proof) and a statement for real matrices (i.e. the Laguerre Orthogonal Ensemble) dates back to~\cite{Muirhead2009}.

The characteristic function (or Fourier transform) of a $\Her_N$-valued random variable $\mathbf X$ is defined as
\begin{equation}\label{fourier}
\phi_{\mathbf X}(T)=\E_{\mathbf X}[e^{i\tr \mathbf X T}],
\end{equation}
where the expectation on the right-hand side is with respect to the measure of $\mathbf X$ and $T$ is a $\Her_N$-valued variable. It worth noting that the characteristic function~\eqref{fourier} is nothing but the entry-wise Fourier transform. Since matrix addition is also entry-wise, we have the standard property
\begin{equation}\label{fourier-sum}
\phi_{\mathbf X+\mathbf Y}(T)
=\E_{\mathbf X+\mathbf Y}[e^{i\tr T(\mathbf X+\mathbf Y)}]
=\E_{\mathbf X}[e^{i\tr T\mathbf X}]\E_{\mathbf Y}[e^{i\tr T\mathbf Y}]
=\phi_{\mathbf X}(T)\phi_{\mathbf Y}(T)
\end{equation}
for independent $\Her_N$-valued random variables $\mathbf X$ and $\mathbf Y$. 

\begin{lemma}\label{lemma:fourier-LUE}
Let $\mathbf X$ be a LUE random matrix with PDF $\mathcal P_{a,b}$, then its characteristic function is
\begin{equation}\label{fourier-LUE}
\phi_{\mathbf X}(T)=\frac1{\det(I_N-iT/b)^{N+a}},
\end{equation}
where $I_N$ denotes the $N\times N$ identity matrix.
\end{lemma}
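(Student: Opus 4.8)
The plan is to evaluate the defining integral~\eqref{fourier} directly as a matrix gamma (Ingham--Siegel) integral, and then to fix the prefactor by demanding $\phi_{\mathbf X}(0)=1$. Inserting the PDF~\eqref{Wishart},
\begin{equation}
\phi_{\mathbf X}(T)=\frac1{\mathcal Z_{a,b}(N)}\int_{X>0}e^{i\tr XT-b\tr X}\det(X)^a\,dX
=\frac1{\mathcal Z_{a,b}(N)}\int_{X>0}e^{-\tr(XS)}\det(X)^a\,dX,\qquad S:=bI_N-iT.
\end{equation}
Since $T\in\Her_N$ we have $\lvert e^{i\tr XT}\rvert=1$ on the positive-definite cone, so the integrand is dominated by $e^{-b\tr X}\det(X)^a$; integrability of the latter is exactly the statement $\mathcal Z_{a,b}(N)<\infty$, so the integral converges absolutely (and is entire in the complexified entries of $T$).

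Next I would invoke the matrix gamma integral: for every complex $N\times N$ matrix $S$ whose Hermitian part $\tfrac12(S+S^*)$ is positive definite,
\begin{equation}\label{ingham}
\int_{X>0}e^{-\tr(XS)}\det(X)^a\,dX=\Gamma_N(N+a)\,\det(S)^{-(N+a)},
\end{equation}
where $\Gamma_N(\alpha)=\pi^{N(N-1)/2}\prod_{j=0}^{N-1}\Gamma(\alpha-j)$ is finite exactly when $\Re\alpha>N-1$ (here $a>-1$) and $\det(S)^{-(N+a)}$ denotes the branch equal to $1$ at $S=I_N$. For real diagonal $S$ this is just a product of $N$ scalar gamma integrals; the general case follows by analytic continuation, both sides being holomorphic and locally bounded on the connected tube $\{S:\tfrac12(S+S^*)>0\}$. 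Applying~\eqref{ingham} with $S=bI_N-iT$ (legitimate since its Hermitian part is $bI_N>0$) and also with $T=0$ identifies $\mathcal Z_{a,b}(N)=\Gamma_N(N+a)\,b^{-N(N+a)}$; taking the ratio cancels $\Gamma_N(N+a)$ and leaves $\phi_{\mathbf X}(T)=b^{N(N+a)}\det(bI_N-iT)^{-(N+a)}=\det(I_N-iT/b)^{-(N+a)}$, which is~\eqref{fourier-LUE}.

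The only ingredient that is not entirely elementary is~\eqref{ingham} for complex $S$, and the main obstacle is to make that analytic continuation airtight --- pinning down the branch of $\det(S)^{-(N+a)}$ on the tube and supplying a uniform local bound for the left side; this is classical (Ingham--Siegel / Gindikin). If one wishes to avoid it, there is a self-contained alternative: for $a\in\{0,1,2,\dots\}$ realise the LUE matrix as $\mathbf G\mathbf G^*$ with $\mathbf G$ an $N\times(N+a)$ matrix of i.i.d.\ complex Gaussians of variance $1/b$, factor $\phi_{\mathbf X}(T)$ over the $N+a$ independent columns via~\eqref{fourier-sum}, evaluate the resulting one-column Gaussian integral (diagonalising $T$) to $\det(I_N-iT/b)^{-1}$, and then continue in $a$ to all $a>-1$. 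One may also observe at the outset that unitary invariance of the LUE forces $\phi_{\mathbf X}(UTU^{-1})=\phi_{\mathbf X}(T)$, so both sides of~\eqref{fourier-LUE} depend only on the eigenvalues of $T$ and it suffices to treat diagonal $T$.
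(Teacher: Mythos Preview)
Your argument is correct, but it follows a different path from the paper. You compute the characteristic function in one stroke via the matrix gamma (Ingham--Siegel) integral on the positive-definite cone, so the whole computation reduces to knowing~\eqref{ingham} (or, in your alternative, a Gaussian realisation for integer $a$ plus analytic continuation in $a$). The paper instead diagonalises $X$, integrates over the eigenvectors using the Harish-Chandra--Itzykson--Zuber formula, then applies Andr\'eief's identity to collapse the $N$-fold eigenvalue integral to a determinant of one-dimensional integrals, each of which is just the scalar gamma characteristic function; a Vandermonde identity finishes the evaluation. Your route is shorter and avoids the HCIZ machinery entirely, at the price of importing~\eqref{ingham} and its analytic continuation to the tube $\Re S>0$ (which you rightly flag as the only non-elementary step). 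The paper's route is longer but stays within the toolkit already used elsewhere in the paper (HCIZ, Andr\'eief, scalar gamma facts) and never needs to discuss branches of $\det(S)^{-(N+a)}$; it also makes the dependence on the eigenvalues of $T$ explicit from the start rather than via unitary invariance.
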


\begin{proof}
By definition, the characteristic function is given by
\begin{equation}
\phi_{\mathbf X}(T)=\int_{\Her_N} \mathcal P_{a,b}(X)e^{i\tr X T}dX
\end{equation}
with $\mathcal P_{a,b}$ given by~\eqref{Wishart}. Denote by $x_1,\ldots,x_N$ the eigenvalues of $X$, then by an eigenvalue decomposition we have
\begin{equation}
\phi_{\mathbf X}(T)=\int_{\R^N} P_{a,b}(x_1,\ldots,x_N)\Big(\int_{U(N)}e^{i\tr \diag(x_1,\ldots,x_N) UTU^{-1}}dh(U)\Big)\prod_{i=1}^Ndx_i,
\end{equation}
where $P_{a,b}$ is given by~\eqref{LUE} and $dh(U)$ is the normalised Haar measure on the unitary group $U(N)$. The unitary integral within the brackets is a Harish-Chandra--Itzykson--Zuber integral~\cite{HC1957,IZ1980}, thus integrating over the unitary group yields 
\begin{equation}
\phi_{\mathbf X}(T)=\frac{i^{N(N-1)/2}\prod_{j=1}^N\Gamma(j)}{\triangle_N(t)}
\int_{\R^N} \frac{P_{a,b}(x_1,\ldots,x_N)}{\triangle_N(x)}\det_{1\leq k,\ell\leq N}(e^{i x_kt_\ell})\prod_{i=1}^Ndx_i,
\end{equation}
where $t_1,\ldots,t_n$ are the eigenvalues of the matrix $T$.
Inserting the expression~\eqref{LUE} for the PDF $\mathcal P_{a,b}$, we get
\begin{equation}
\phi_{\mathbf X}(T)=\frac{i^{N(N-1)/2}}{\triangle_N(t)Z_{a,b}(N)}\prod_{j=1}^N\Gamma(j)
\int_{\R^N} \triangle_N(x)\det_{1\leq k,\ell\leq N}(e^{i x_kt_\ell})\prod_{i=1}^N w_{a,b}(x_i)dx_i.
\end{equation}
The $N$-fold integral in this expression can be replaced by single integral by means of Andreief's integration formula~\cite{Andreief1886,Forrester2018}. We have
\begin{equation}\label{fourier-sum-proof}
\phi_{\mathbf X}(T)=\frac{i^{N(N-1)/2}N!}{\triangle_N(t)Z_{a,b}(N)}\prod_{j=1}^N\Gamma(j)
\det_{1\leq k,\ell\leq N}\Big(\int_\R x^{k-1}e^{i xt_\ell}w_{a,b}(x)dx\Big).
\end{equation}
Recalling the definition of the weight function~\eqref{weight}, we see that
\begin{equation}\label{weight-sum-proof}
 x^{k-1}w_{a,b}(x)=b^{1-k}\frac{\Gamma(a+k)}{\Gamma(a+1)}w_{a+k-1,b}(x).
\end{equation}
Moreover, $w_{a+k-1,b}$ is the PDF of a gamma random variable, and its Fourier transform (i.e. characteristic function) is known from introductory probability,
\begin{equation}\label{weight-fourier-sum-proof}
 \int_\R w_{a+k-1,b}(x)e^{i xt}dx=\frac{1}{(1-it/b)^{a+k}}.
\end{equation}
Now, using~\eqref{weight-sum-proof} and~\eqref{weight-fourier-sum-proof} in~\eqref{fourier-sum-proof} gives 
\begin{equation}
\phi_{\mathbf X}(T)=\frac{i^{N(N-1)/2}}{\triangle_N(t)Z_{a,b}(N)}\prod_{j=1}^N\frac{\Gamma(j)\Gamma(a+j)}{b^{j-1}\Gamma(a+1)}
\det_{1\leq k,\ell\leq N}\Big(\frac1{(1-it_\ell/b)^{a+k}}\Big).
\end{equation}
The final determinant in this expression is propositional to a Vandermonde determinant, we have
\begin{equation}
\det_{1\leq k,\ell\leq N}\Big(\frac1{(1-it_\ell/b)^{a+k}}\Big)=
(-i/b)^{N(N-1)/2}\triangle_N(t)\prod_{j=1}^N\frac1{(1-it_j/b)^{N+a}}
\end{equation}
Exploting this relation together with the explicit expression for the normalisation constant~\eqref{normalisation} yields
\begin{equation}
\phi_{\mathbf X}(T)=\prod_{j=1}^N\frac1{(1-it_j/b)^{N+a}}
\end{equation}
We recall that $t_1,\ldots,t_n$ are the eigenvalues of the matrix $T$, and thus recognise the right-hand side as the determinant in~\eqref{fourier-LUE}, which completes the proof of the lemma.
\end{proof}

Now, let $\mathbf X$ and $\mathbf Y$ be independent LUE random matrices with PDFs $\mathcal P_{a,b}$ and $\mathcal P_{a',b}$, respectively. We know from~\eqref{fourier-sum} and Lemma~\ref{lemma:fourier-LUE} that
\begin{equation}\label{fourier-sum-LUE}
 \phi_{\mathbf X+\mathbf Y}(T)=\frac{1}{\det(I_N-iT/b)^{2N+a+a'}}.
\end{equation}
The PDF for the random matrix $\mathbf Z=\mathbf X+\mathbf Y$ is obtained by taking the inverse Fourier transform. To do so, we note that~\eqref{fourier-sum-LUE} is identical to~\eqref{fourier-LUE} with a shift $a\mapsto a+a'+N$. Thus, it follows immediately that the inverse Fourier transform is $\mathcal P_{a+a'+N,b}$. In other words, if $\mathbf X$ and $\mathbf Y$ are independent random matrices with PDFs $\mathcal P_{a,b}$ and $\mathcal P_{a',b}$, then $\mathbf Z=\mathbf X+\mathbf Y$ is a LUE  random matrix with PDF $\mathcal P_{a+a'+N,b}$, which is the statement of Theorem~\ref{thm:sum}.

~

~

~


\end{document}